\documentclass[11pt]{article}
\usepackage{amsmath,amsthm,amssymb}
\usepackage[utf8]{inputenc}
\usepackage{fullpage}
\usepackage{hyperref}
\usepackage{url}
\usepackage{graphicx}
\usepackage{soul,xcolor}
\usepackage{tikz}
\usepackage{optidef}
\usepackage{float}

\usepackage{algorithmic}
\usepackage{algorithm}
\usepackage{bm}
\usepackage[normalem]{ulem}
\usepackage[capitalise]{cleveref}
\usetikzlibrary{decorations.pathreplacing}
\usepackage{todonotes}
\usepackage{physics}
\usepackage{caption}
\usepackage{subcaption}

\newcommand*\samethanks[1][\value{footnote}]{\footnotemark[#1]}

\newtheorem{theorem}{Theorem}
\newtheorem{lemma}[theorem]{Lemma}
\newtheorem{corollary}[theorem]{Corollary}

\newtheorem{define}[theorem]{Definition}

\newtheorem{example}[theorem]{Example}

\newcommand{\D}{\mathcal{D}}

\newcommand{\F}{\mathcal{F}}
\newcommand{\C}{\mathcal{C}}

\usepackage{xspace}
\newcommand{\dataex}{{\sc Data Exchange}\xspace}
\newcommand{\dataexprob}{{\sc Data Exchange Problem}\xspace}

\title{Data Exchange Markets via Utility Balancing}
\author{
  Aditya Bhaskara\thanks{Supported by NSF awards CCF-2008688 and CCF-2047288.} \\
  School of Computing \\
  University of Utah \\
  \texttt{bhaskaraaditya@gmail.com} \\
  \and
  Sreenivas Gollapudi \\
  Google Research \\
  Mountain View, CA \\
  \texttt{sgollapu@google.com} \\
  \and
  Sungjin Im\thanks{Supported in part by NSF grants  CCF-1844939 and CCF-2121745.} \\
  University of California \\
  Merced, CA \\
  \texttt{sim3@ucmerced.edu}
  \and
  Kostas Kollias \\
  Google Research \\
  Mountain View, CA \\
  \texttt{kostaskollias@google.com}
  \and
  Kamesh Munagala\thanks{Supported by NSF grant CCF-2113798.} \qquad Govind S. Sankar\samethanks[3] \\
  Computer Science Department \\
  Duke University \\
  \texttt{\{munagala,govind.subash.sankar\}@duke.edu} 
}
\date{}

\begin{document}

\maketitle

\begin{abstract}
This paper explores the design of a balanced data-sharing marketplace for entities with heterogeneous datasets and machine learning models that they seek to refine using data from other agents. The goal of the marketplace is to encourage participation for data sharing in the presence of such heterogeneity. Our market design approach for data sharing focuses on interim utility balance, where participants contribute and receive equitable utility from refinement of their models. We present such a market model for which we study computational complexity, solution existence, and approximation algorithms for welfare maximization and core stability. We finally support our theoretical insights with simulations on a mean estimation task inspired by road traffic delay estimation.
\end{abstract}

\section{Introduction}
The power of big data comes from the improved decision making it enables via training and refining machine learning models. To unlock this power to the fullest, it is critical to enable and facilitate data sharing among different units in an organization and between different organizations. The market for big data ``{\em accounted for USD 163.5 Billion in 2021 and is projected to occupy a market size of USD 473.6 Billion by 2030 growing at a CAGR of 12.7\%}''~\cite{Data1}.  Motivated by the emergence of online marketplaces for data such as SnowFlake~\cite{Data2}, in this paper we consider the timely question: 

\begin{quote}
How can we design a principled marketplace for sharing data between entities (organizations or applications) with heterogeneous datasets they own and machine learning models they seek to refine, so that each entity is encouraged to voluntarily participate? 
\end{quote}

Towards this end, we assume agents have diverse ML models for decision making that they seek to refine with data. At the same time, each agent possesses data that may be relevant to the tasks of other agents. As an example, a retailer may have sales data for certain products in certain geographic locations, but may want data for related products in other markets to make a better prediction of sales trends. This data could be in the hands of competing retailers. Similarly, a hospital system seeking to build its in-house model for a disease condition based on potentially idiosyncratic variables may want patient data from other hospital systems to refine this model.

In our paper, we assume the participants in the market have no value for money. We further assume that the agents seeking data are the same as those seeking to refine models. Therefore we consider an exchange economy without money as opposed to a two-sided market with buyers and sellers. This is a reasonable assumption for non-profits such as hospital systems or universities, where student or patient data can be ``exchanged'' but not sold for profit. Though we seek a market design without money, the agents in the market still need to be incentivized to voluntarily participate in the market and exchange data, and this is the main focus of our design.

In the settings we consider, data is often sensitive and private~\cite{Dahleh19,Franklin20}.  As in~\cite{Dahleh19}, we address this issue by having a trusted central entity (or clearinghouse) with who all agents share their ML tasks and datasets. This entity can refine or retrain the model for one agent using samples of the data from other agents. For instance, if each agent specifies the gradient of their loss function and their in-house model parameters, the central entity can run stochastic gradient descent to update the parameters using the other data. This way, the central entity can efficiently compute the loss of the refined model and hence the utility of a collection of datasets to a model. By using a utility sharing method such as Shapley value that has been well-studied in machine learning~\cite{fryer2021shapley,ghorbani2019data}, the entity can use the same process to attribute this utility gain fairly to the agents that contributed data to the refinement. The entity then sends the refined models back to the respective agents, preserving data privacy in the process.


\subsection{Model and Results}
Our approach to market design for data exchange without money is to view it as {\em utility balancing} -- to encourage voluntary participation, an agent should contribute as much utility to other agents as they receive from them. In market design terminology, this corresponds to having a common endogenous price per unit utility bought or sold, so that each agent is revenue-neutral. This can be viewed as a form of fairness in the exchange. The goal of the central entity is to find the right amount of data any set of agents should exchange, so that the overall solution is utility balanced. The solution is randomized, where for each agent, we compute a distribution over sets of other agents. When this agent chooses a set from this distribution to obtain data from, then utility balance holds in expectation (or {\em interim}). We motivate interim balance in settings where the same agents trade over many epochs so that the total utility across these epochs approaches its expectation. The objective of the central entity could be to either maximize total utility of agents (social welfare maximization), or core-stability among coalitions of agents. 

We call this overall problem the \dataexprob. We study computational complexity and existence results for the \dataexprob under natural utility functions and how that utility is shared among contributors. At a high level, our main results are the following. 

\begin{enumerate}
\item We present a formal model for the \dataexprob in \cref{sec:prelim} based on interim utility balancing, codifying the objectives of welfare maximization and stability.
\item We show {\sc NP-Hardness} (\cref{sec:hardness}) and
 polynomial time approximation algorithms for welfare maximization (\cref{sec:approx,sec:continuous}). We present a logarithmic approximation in \cref{thm:main} for submodular utilities and a general class of sharing rules that includes the well-known Shapley value, and a PTAS for concave utilities with proportional sharing in \cref{thm:cont0,thm:main}. 
\item In \cref{sec:imbalance}, we show that the same solution framework also handles the case where the balance condition can be relaxed by compensating or extracting payment from agents using a convex function on the extent of imbalance.
\item We show the existence of core stable and strategyproof solutions and the trade-offs achievable between these notions and welfare in \cref{sec:core}. We also show that a specific type of stable and strategyproof solution can be efficiently computed via greedy matchings. 
\item We finally perform simulations on a road network where agents are paths that are interested in minimizing sample variance. We show that our approximation algorithms significantly outperform a pairwise trade benchmark, showing the efficacy of our model and algorithms.
\end{enumerate}

We present the statements of these results more formally in \cref{sec:prelim} after we present the formal mathematical model. 

\subsection{Related Work}
\label{sec:related}
The emerging field of data markets already has unearthed several novel challenges in data privacy, market design, strategyproofness, and so on. Please see recent work~\cite{Franklin20,Dahleh19,kharman2023adversarially} for a comprehensive enumeration of research challenges. Our paper proposes a market design via a central clearinghouse and utility balancing, with computational and stability analysis.

\paragraph{Exchange Economies.} Our paper falls in the framework of {\em market design}. Though market design for exchange economies -- where agents voluntarily participate in trade given their utility functions and the market constraints -- is a classic problem, much of this work concerns markets for goods that cannot be freely replicated.  The key challenge in our setting is that data can be freely replicated, which makes the market design problem very different.

There are two classic exchange economies that relate to our work -- the trading of indivisible goods~\cite{TTC} and market clearing~\cite{ArrowD}.  The first classic problem is also termed the {\em house allocation problem}. Here, every agent owns a house and has a preference ordering over other houses. The goal is to allocate a house to each agent in a fashion that lies in the core: No subset of agents can trade houses and improve their outcome. Shapley and Scarf~\cite{TTC} showed that the elegant {\em top trading cycles} algorithm finds such a core-stable allocation. A practical application of this framework is to {\em kidney exchanges}~\cite{Roth-Kidney,Ashlagi-Kidney}, which is widely studied and implemented.  Our problem falls in the same framework as house allocation, albeit with data instead of houses. Data is a replicable resource, and leads to complex utilities for agents; these aspects make the algorithm design problem very different, as we compare in \cref{subsec:compare}.  

In the same vein, the second classic problem of market clearing  for non-replicable goods dates back to Arrow and Debreau~\cite{ArrowD}, and has elegant solutions via equilibrium pricing of the goods. However, equilibrium prices are harder to come by for {\em replicable} digital goods such as music or video~\cite{jain2012equilibrium}.  We bypass this issue by having a common price per unit of utility traded, which translates, via eliminating the price, to our flow formulation on utilities. 

\paragraph{Federated Learning.} Our work is closely related to recent work by Donahue and Kleinberg~\cite{donahue2021optimality,donahue2020modelsharing} on forming coalitions for data exchange in federated learning. However, in their settings, all agents have the same learning objective (either regression or mean estimation), but have data with different bias, leading to local models with different bias. The goal is to form coalitions where the error of the model for individual agents, measured against their own data distribution, is minimized. The authors present optimal coalitional structures for maximizing welfare, as well as achieving core stability. Similarly, the work of Rasouli and Jordan~\cite{rasouli2021data} considers data sharing where all agents have similar preferences over other agents. In contrast, we consider agents with {\em heterogeneous} tasks and data requirements, which makes even welfare maximization {\sc NP-Hard}. 

\paragraph{Pricing and Shapley Value.} In the settings we study, agents are both producers and consumers of data, motivating an exchange economy like the works cited above. When sellers of data are distinct from buyers, various works~\cite{Chawla19,Babaioff12,Cummings15,Dahleh19,Ghosh11} have studied pricing and incentives for selling aspects such as privacy and accuracy. See~\cite{Pei_2022} for a survey.  

One important aspect of our work is allocating utility shares to the agents contributing data. For most of our paper, we adopt the Shapley value~\cite{Shapley}. Though this method has its roots in cost sharing in Economics, it has seen a resurgence in interest as a method to measure the utility of individual datasets for a machine learning task~\cite{fryer2021shapley,ghorbani2019data}. This method has many nice properties; see~\cite{Dahleh19} for a discussion of these properties in a data sharing context. We note that our work presents a general framework and as we show in the paper, it can be adapted to other utility sharing rules.

\section{The Data Exchange Problem and Our Results}
\label{sec:prelim}
Without further ado, we formally present the \dataexprob and a summary of our results. We are given a set of agents $X$. Each agent $i \in X$ has a dataset $\D_i$ and a machine learning task $t_i$. (Our results easily extend to the setting where each agent has multiple datasets and  tasks.) The accuracy of the task $t_i$ can be improved if agent $i$ obtains the datasets of other users. 

\subsection{Utility Functions}
Suppose agent $i$ obtains the datasets $\cup_{j \in S} \D_j$ of a subset $S$ of agents, then the improvement in accuracy is captured by a {\em utility function} $u_i(S)$. We assume this function can be computed efficiently for a given set $S$ of agents. Further, this set function is assumed to satisfy the following:
\begin{description}
    \item[Non-negativity and Boundedness:] $u_i(S) \in [0,1]$ for all $S \subseteq X \setminus \{i\}$. Furthermore, $u_i(\emptyset) = 0$.
    By scaling, we can also assume that
    $\max_i u_i(X)=1$.
    \item [Monotonicity:] $u_i(S) \ge u_i(T)$ for all $T \subset S$.
    \item [Submodularity:] This captures diminishing returns from obtaining more data. For all $T \subset S$ and $q \notin S$, we have
    $ u_i(S \cup \{q\}) - u_i(S) \le u_i(T \cup \{q\}) - u_i(T).$
\end{description}


A special case of submodular utilities is the {\bf symmetric weighted} setting: Here, there is a concave non-decreasing function $f_i$ for each agent $i$. Suppose agent $j$'s dataset that she contributes to $i$ has size $s_{ij}$, then we have $u_i(S) = f_i\left(\sum_{j \in S} s_{ij}\right)$. In other words, the utility only depends on the total size of the datasets contributed by the agents in $S$.

\begin{example}
Suppose each agent $i$ is interested in estimating the population mean of data in its geographical vicinity, and its utility function is the improvement in variance of this estimate. In this case, agent $j$ can contribute $s_{ij}$ amount of data to agent $i$, and we let $D_i(S) = \sum_{j \in S} s_{ij}$. Assuming $s_{ii} = 1$ and that these data are drawn $i.i.d.$ from a population with variance $\sigma_i^2$, we have $u_i(S) = \sigma_i^2 \left(1 - \frac{1}{1 + D_i(S)}\right)$ and falls in the symmetric weighted setting.
\end{example}

\paragraph{Continuous Utilities.} Though the bulk of the paper focuses on utilities modeled as set functions, in \cref{sec:continuous}, we also consider the setting where agents can exchange fractions of data. Suppose agent $j$ transfers $y_{ij}$ fraction of her data to agent $i$, then agent $i$'s utility is modeled as a continuous, monotonically non-decreasing function $u_i(\vec{y_i}) \in [0,1]$, where $\vec{y_i} = \langle y_{i1}, y_{i2}, \ldots \rangle$. As we show later, such utilities lead to more tractable algorithmic formulations.


\subsection{Utility Sharing}
The utility $u_i(S)$ that $i$ gains from the set $S$ of agents is attributed to the agents in $S$ according to a fixed rule. We let $h_{ij}(S)$ denote the contribution of agent $j \in S$ to the utility $u_i(S)$, so that $\sum_{j \in S} h_{ij}(S) = u_i(S)$. In this paper, we consider two classes of sharing rules that have been studied in cooperative game theory, and more recently in machine learning:

\begin{description}
\item [Shapley Value:]  This is a classic ``gold-standard'' rule from cooperative game theory~\cite{Shapley,ghorbani2019data,fryer2021shapley}, and works as follows: Take a random permutation of the agents in $S$. Start with $W$ as the empty set and consider adding the agents in $S$ one at a time to $W$. At the point where $j$ is added, let $\Delta_j = u_i(W \cup \{j\}) - u_i(W)$ be the increase in utility due to the datasets in $W$. The Shapley value $h_{ij}(S)$ is the expectation of $\Delta_j$ over all random permutations of $S$. 
\item [Proportional Value:] In this class of rules~\cite{JohariT,Branzei}, there is a fixed set of weights $\{w_{ij}\}$, and we define $h_{ij}(S) = \frac{w_{ij}}{\sum_{k \in S} w_{ik}} \cdot u_i(S)$. The natural special case is the setting $w_{ij} = u_i(\{j\})$, so that the utility is shared proportionally to how much $j$'s dataset would have individually contributed to $i$. 
\end{description}

For submodular utilities, the Shapley value satisfies a property called {\em cross-monotonicity}~\cite{moulin_1988}: if $T \subset S$ and $j \in T$, then $h_{ij}(T) \ge h_{ij}(S)$. Note that there is an entire class of rules that satisfy cross-monotonicity for submodular utilities; please see~\cite{ghorbani2019data,fryer2021shapley} for a detailed discussion of the Shapley value and related cross-monotonic rules in the context of machine learning. In contrast, the proportional value does not satisfy this property.  We contrast these rules in the following example.

\begin{example}
  There are $n$ agents each contributing data to agent $0$. The first $n-1$ agents have identical data, so that $u_0(S) = 0.5$ for any non-empty $S \subseteq [n-1]$. Agent $n$ has a unique dataset so that $u_0(\{n\}) = 0.5$, and $u_0(S \cup \{n\}) = 1$ for any non-empty $S \subseteq [n-1]$. Then, for  $S \subseteq [n-1]$, the Shapley value is $h_{0n}(S \cup \{n\}) = 0.5$ and $h_{0j}(S \cup \{n\}) = \frac{1}{2|S|}$ for $j \in S$. However, the proportional share with $w_{ij}=u_i(\{j\})$ is $h_{0j}(S \cup \{n\}) = \frac{1}{|S|+1}$ for all $j \in S \cup \{n\}$.
\end{example} 

In the above example, the Shapley value is more reflective of the actual contributions of the individual agents compared to proportional value; however, the latter rule sometimes leads to better algorithmic results. In particular, for continuous concave utilities and the symmetric weighted setting, the proportional sharing rule is more tractable, while for general submodular utilities, the Shapley value is more tractable. 

\begin{example}
For the symmetric weighted setting described above,  the proportional value with $w_{ij} = s_{ij}$ has a relatively simple form: $h_{ij}(S) = \frac{s_{ij}}{\sum_{k \in S} s_{ik}} \cdot f_i(\sum_{k \in S} s_{ik})$.
\end{example}




\subsection{Constraints for \dataex: Utility Flow}
We now present the constraints of the \dataexprob. We assume there is a central entity that computes this exchange. The key constraint is that each agent receives as much utility from the exchange as it contributes. In this exchange, each agent $i$ is associated with a distribution $\{x_{iS}\}$ over sets $S \subseteq X \setminus \{i\}$ of agents whose datasets she could receive. In other words, with mutually exclusive probability $x_{iS}$, agent $i$ receives the datasets from $S$ and receives utility $u_{iS}$ as a result.

The first constraint encodes that $\{x_{iS}\}$ define a probability distribution over possible sets $S$.
\begin{align}\label{eqn:probability-sum-1}
    \forall~i&, \sum_{S}x_{iS} \le 1
\end{align}
where the remaining probability is assigned to $S = \emptyset$.

The {\sc balance} condition captures that the expected utility contributed by an agent to other agents is equal to the expected utility she receives.
\begin{align}\label{eqn:balance}
    \forall~i&, \sum_{S}\sum_{j\in S} h_{ij}(S)x_{iS}=\sum_{j}\sum_{S|i\in S}h_{ji}(S)x_{jS}
\end{align}
Note that the balance condition is {\em interim}, meaning it holds for the expected utility. Any solution that satisfies the {\sc balance} condition subject to \cref{eqn:probability-sum-1} is said to be a  {\em feasible} solution to the \dataexprob.

\begin{example}
The above model has an interesting connection to Markov chains. Suppose we restrict $S$ to either be $\emptyset$ or $S_i = X \setminus \{i\} $. Let $v_{ij} = h_{ij}(S_i)$ and $y_{i} = x_{iS_i}$. Then we have the constraints:
\begin{align*}
    y_i  \cdot \sum_{j \in S_i} v_{ij} &=  \sum_{j | i \in S_j} y_j v_{ji} & \forall i \in X \\
    y_i  &\in  [0,1]   &\forall i \in X
\end{align*} 
Set $w_i = \sum_{j \in X \setminus \{i\}} v_{ij}$, and $p_{ij} = \frac{v_{ij}}{w_i} \in [0,1]$. Then, $\sum_{j \in S_i} p_{ij} = 1$ for all $i$. Further, setting $z_i = y_i w_i$, the first constraint becomes
$$ z_i = \sum_{j | i \in S_j} z_j p_{ji}.$$
Then, treating the $p_{ji}$ as transition probabilities from $j$ to $i$ in a Markov chain, the $\{z_i\}$ are the steady state probabilities of the chain. Assuming all $p_{ji} > 0$, by the Perron-Frobenius theorem, there is a unique set of non-negative values $\{z_i\}$.
\end{example}

\paragraph{Remarks.} First note that for deterministic exchange where $x_{iS} \in \{0,1\}$, the balance constraints may cause very low utility or lack of a feasible solution. This motivates our use of randomization and interim balance. A randomized solution is justified when agents interact over many epochs with different datasets and models. Though any specific interaction is ex-post imbalanced, these even out over time by the law of large numbers. Such interim balance also makes our algorithmic problem more tractable. 

Next, though we don't discuss it in the paper, it is  easy to generalize the model to the setting where each agent $i$ has a collection of datasets and a collection of tasks, and each needs different datasets.  Further, in \cref{sec:continuous}, we discuss the changes that need to be made to the constraints to handle continuous, concave utilities. 

\cref{eqn:balance} is a strict constraint, and therefore trades off with objectives such as social welfare. In \cref{sec:imbalance}, we also consider the case where the balance can be violated.

Finally, as mentioned before, we assume the clearinghouse has accurate access to all datasets and tasks, and can hence compute utilities, their shares, and the feasible \dataex solution. We ignore strategic misreporting on the part of the agents for most of the paper, but we will discuss this aspect and its trade-off with other objectives towards the end in \cref{subsec:truth}.


\subsection{Social Welfare Objective}
Our goal is to find the optimal \dataex subject to feasibility. Towards this end, we mainly consider the {\em social welfare} objective where the goal is to find the distributions $\{ x_{iS} \}$ that maximizes:
\begin{equation}
\label{eqn:welfare}
\mbox{Social Welfare} = \sum_{i \in X} \sum_{S \subseteq X \setminus \{i\}} u_i(S) x_{iS}  =  \sum_{i \in X} \sum_{S \subseteq X \setminus \{i\}} \sum_{j \in S} x_{iS} h_{ij}(S).
\end{equation}


We will study the computational complexity of this problem. 

\paragraph{Remark about running times.} Throughout, we assume that there is an efficient subroutine {\sc MLSub} that given an agent $i$ and set $S \subseteq X \setminus \{i\}$ returns the utility $u_i(S)$ and the shares $h_{ij}(S)$ for all $j \in S$. We remark that by ``polynomial'' running time, we mean polynomially many calls to {\sc MLSub}, combined with polynomially many ancillary computations. Such an approach decouples the exact running time of {\sc MLSub} from our results. For ML tasks, estimating $u_i(S)$ will require retraining the model using data from $S$; this can typically be done efficiently. Further, estimate $h_{ij}(S)$ can be done to a good approximation via sampling permutations; see~\cite{ghorbani2019data,Dahleh19}.

\paragraph{Computational complexity of welfare maximization.} The welfare maximization problem is a linear program with $2n$ constraints, so that the optimum solution has at most $2n$ non-zero variables. Nevertheless, we show {\sc NP-Hardness}  by a reduction from {\sc Exact 3-Cover}. We note that the hardness result holds even when for any $S$, both $u_i(S)$ and $h_{ij}(S)$ are computable in near-linear time.

\begin{theorem} [Proved in \cref{sec:hardness}]
\label{thm:hard}
The welfare maximization objective in \dataex is {\sc NP-Hard} for submodular utilities and Shapley value sharing. 
\end{theorem}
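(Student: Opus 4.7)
The plan is to reduce from Exact 3-Cover (X3C). Given an X3C instance with universe $U$ of $3m$ elements and a collection $\mathcal{C}$ of $3$-subsets of $U$, I would construct a \dataex instance with one ``element agent'' $a_e$ for each $e \in U$ and one ``set agent'' $a_C$ for each $C \in \mathcal{C}$. A natural starting choice of utilities is a Boolean coverage function $u_{a_e}(S) = \min\bigl(1,\,|\{C : a_C \in S,\ e \in C\}|\bigr)$ for element agents and a linear function $u_{a_C}(S) = |S \cap C|/3$ for set agents; both are monotone, submodular, and have closed-form Shapley shares computable in near-linear time, as required by the theorem's remark on efficient evaluation of $u_i$ and $h_{ij}$.

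The first step is to argue, via submodularity and the explicit form of the Shapley shares, that any optimal solution may be assumed to place mass only on singletons $\{a_C\}$ (with $e \in C$) for element agents and on the full element set of $C$ for set agents: any other set in the support can be replaced by such singletons/full-sets without decreasing welfare or violating balance. This reduces the exponential LP to compact variables $x_{e,C}$ and $y_C$. The balance constraint for $a_C$ then reads $y_C = \sum_{e \in C} x_{e,C}$, that for $a_e$ reads $\sum_{C \ni e} x_{e,C} = \frac{1}{3}\sum_{C \ni e} y_C$, and after telescoping the welfare collapses to $2\sum_C y_C$; combined with the probability constraints $\sum_{C \ni e} x_{e,C} \le 1$ and $y_C \le 1$ this yields an LP upper bound of the form $W^* \le 6m$.

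The key and subtle step, which I expect to be the main obstacle, is to pin the achievability of this LP upper bound to the existence of an exact $3$-cover. The bare construction above is insufficient on its own, because the resulting small LP admits strictly fractional solutions (e.g.\ $y_C$ spread uniformly over covering sets so that $\sum_{C \ni e} y_C = 3$) that match the welfare of an integer exact cover and therefore do not distinguish yes and no instances of X3C. The reduction must therefore be strengthened so that fractional covers become strictly suboptimal. Two natural routes are (i) to augment the construction with a small number of auxiliary agents whose balance constraints effectively force $y_C \in \{0,1\}$, for instance by coupling each set agent to a gadget of helpers that can only balance at extreme values of $y_C$; and (ii) to modify the submodular utilities, e.g.\ replacing $u_{a_C}$ by a strictly concave function of $|S \cap C|$ or perturbing Shapley shares via extra helpers, so that the LP bound is tight only on integer covers. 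Carrying out one of these routes while preserving submodularity, the Shapley-value structure, and near-linear-time evaluation of $u$ and $h$ is the technical heart of the reduction; once in place, one concludes that the welfare of the constructed \dataex instance exceeds the chosen threshold if and only if $\mathcal{C}$ admits an exact $3$-cover of $U$, giving the stated NP-hardness.
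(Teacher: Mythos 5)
Your proposal correctly identifies the right source problem (X3C) and, more importantly, correctly diagnoses the central obstacle: a bare element-agent/set-agent construction with coverage utilities yields a welfare LP whose fractional optimum matches the integer optimum, so YES and NO instances are not separated. However, the proof stops exactly at that point. The two routes you sketch for closing the gap --- auxiliary gadgets forcing $y_C \in \{0,1\}$, or strictly concave set-agent utilities --- are named but not constructed, and you explicitly defer ``the technical heart of the reduction.'' As written, this is a reduction plan with the decisive step missing, so it does not establish the theorem.

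For comparison, the paper's resolution of precisely this obstacle is worth noting, because it does not force integrality at all; it instead exploits the Shapley value's credit-splitting. Each set $P_i$ is augmented with a private dummy element $e_i$ that is also covered by a new singleton set $Q_i$, and all coverage utility flows into a single sink agent $w$, which must in turn feed two agents $z_1, z_2$ whose demanded utilities ($\tfrac{7k}{2}$ and $m-\tfrac{k}{2}$) are tuned so that balance is achievable only when exactly $k$ of the $p_i$ participate in covering. In a NO instance, covering all $3k$ original elements requires at least $k+1$ sets $P_j$, and each such $P_j$ halves the Shapley share that $Q_j$ receives for $e_j$; this caps the total utility the $q_j$'s can receive at $m - \tfrac{k+1}{2}$, which makes the balance constraint at $z_2$ unsatisfiable at the target welfare $3(m+3k)$. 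So the ``penalty for over-covering'' is delivered through the sharing rule and the balance constraints jointly, not through an integrality gadget. If you want to complete your own proof, you would need to supply a working gadget of comparable effect; neither of your two proposed routes is obviously easier than the paper's, and the concave-utility route in particular would still need an argument that fractional spreading of mass across covering sets is strictly penalized under the resulting Shapley shares.
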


In \cref{sec:approx}, we develop polynomial time algorithms that multiplicatively approximate social welfare.\footnote{By $\alpha$-approximation for $\alpha \ge 1$, we mean our algorithm achieves at least $\frac{1}{\alpha}$ fraction of the optimal social welfare.} Our algorithms  achieve approximate feasibility, where we relax the {\sc balance} constraint to $\epsilon$-{\sc balance} (where $\epsilon \in (0,1)$): 
\begin{equation}
\label{eq:epsilon}
\left \lvert \sum_{S}\sum_{j\in S} h_{ij}(S)x_{iS} - \sum_{j}\sum_{S|i\in S}h_{ji}(S)x_{jS} \right \rvert \le \epsilon \qquad \forall i.
\end{equation}

The running times we achieve are now polynomial in $\frac{1}{\epsilon}$, with the assumption that there are analogously many calls to {\sc MLSub}. We show the following theorem in \cref{sec:approx}; the precise running time and approximation factors are presented there.
\begin{theorem} [Proved in \cref{sec:approx}]
\label{thm:main}
We can achieve the following approximation factors to the social welfare objective for \dataex via an algorithm that runs in time polynomial in the input size and $\frac{1}{\epsilon}$ and finds  a feasible solution that satisfies $\epsilon$-{\sc balance}:
\begin{itemize}
\item A $O(\log n)$ approximation for arbitrary submodular utilities\footnote{The results hold for arbitrary monotone utilities and only require cross-monotonic sharing; however, cross-monotonicity typically does not hold unless utilities are submodular.} and any cross-monotonic utility sharing rule (including  the Shapley value rule).
\item A $1+\epsilon$ approximation for symmetric weighted setting and proportional value with $w_{ij} = s_{ij}$.
\end{itemize}
\end{theorem}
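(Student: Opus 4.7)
My approach is to write welfare maximization as a (primal) LP with variables $x_{iS}$ for each agent $i$ and each $S \subseteq X \setminus \{i\}$. Even though there are exponentially many variables, there are only $2n$ nontrivial constraints---one probability budget and one balance equality per agent---so any optimal basic solution has support of size $O(n)$; the algorithmic task reduces to discovering a good support, which I plan to do via the dual with an (approximate) separation oracle.

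\textbf{Part 1 (submodular $u_i$, cross-monotonic $h$).} The dual is $\min \sum_i \alpha_i$ subject to $\alpha_i + \sum_{j \in S} h_{ij}(S)(\beta_i - \beta_j) \ge u_i(S)$ and $\alpha_i \ge 0$. The per-agent separation oracle is $\max_S \bigl[ u_i(S) - \sum_{j \in S} h_{ij}(S)\, c_{ij} \bigr]$ with $c_{ij} := \beta_i - \beta_j$. With $u_i$ submodular and $h$ cross-monotonic, this is precisely the welfare problem faced by a Moulin--Shenker style cross-monotonic cost-sharing mechanism for agent $i$'s data-acquisition decision, and the classical summability analysis of cross-monotonic shares gives an $O(H_n) = O(\log n)$-approximate oracle. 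Plugging this oracle into ellipsoid (or a multiplicative-weights LP solver) yields an $O(\log n)$-approximate primal distribution whose violation of each balance equality is at most $\epsilon$, which is absorbed into the $\epsilon$-balance slack of \cref{eq:epsilon} after running for $\mathrm{poly}(n/\epsilon)$ iterations.

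\textbf{Part 2 (symmetric weighted, proportional sharing).} I will pass to the continuous relaxation of \cref{sec:continuous}, writing $y_{ij} \in [0,1]$ for the fraction of $j$'s data that $i$ receives. Under $u_i(\vec y_i) = f_i\bigl(\sum_j y_{ij} s_{ij}\bigr)$ with proportional shares $h_{ij} = (y_{ij} s_{ij} / \sum_k y_{ik} s_{ik}) \cdot u_i(\vec y_i)$, welfare is concave and the balance condition reduces to a linear equality in $y$. The continuous problem is thus a polynomial-size concave program, solvable to $(1+\epsilon)$ in polynomial time (this is the content of \cref{thm:cont0}). To convert $y^*$ into a distribution over discrete sets, sample each $j$ independently into $S$ with probability $y^*_{ij}$ and repeat over $\mathrm{poly}(n/\epsilon)$ trials; concavity of $f_i$ with Jensen's inequality bounds the welfare loss by $(1+\epsilon)$, and standard concentration bounds show that the balance deviation across the samples is at most $\epsilon$.

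\textbf{Main obstacle.} The bottleneck is the $O(\log n)$ guarantee in Part 1: the ``cost'' term $\sum_{j \in S} h_{ij}(S)\, c_{ij}$ has coefficients $h_{ij}(S)$ that themselves depend on $S$ and may carry either sign depending on $c_{ij}$, so the standard submodular-minus-linear toolkit does not apply directly. Cross-monotonicity is exactly the structural hook that dominates this $S$-dependence by a linear surrogate in the Moulin--Shenker fashion, but making the reduction rigorous for mixed-sign $c_{ij}$---and converting a mechanism-design welfare bound into a true LP-rounding guarantee that meshes with the $\epsilon$-balance relaxation---is where I expect the bulk of the technical work to lie.
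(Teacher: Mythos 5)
Your high-level skeleton --- an exponential-size primal LP with $2n$ nontrivial constraints, reduced via an LP solver to a per-agent oracle of the form $\max_S \sum_{j\in S} Q_{ij}h_{ij}(S)$ with mixed-sign weights --- is the same as the paper's (the paper uses the Plotkin--Shmoys--Tardos multiplicative-weights framework on the primal feasibility problem rather than ellipsoid on the dual, but that is a cosmetic difference). The genuine content, however, is the oracle approximation, and that is missing in both parts. For Part 1, you appeal to ``the classical summability analysis of cross-monotonic shares'' in the Moulin--Shenker style, and then concede in your ``Main obstacle'' paragraph that making this work for mixed-sign coefficients is the bulk of the work; summability bounds the social cost of Moulin mechanisms over a nested chain of sets and does not yield an algorithm for maximizing a $Q$-weighted sum of shares over $S$. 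The paper's argument is elementary and quite different: (i) cross-monotonicity implies the optimal $S^*$ contains no $j$ with $Q_{ij}<0$, since deleting such $j$ only increases every remaining share, so all weights may be taken positive; (ii) after a preprocessing lemma discarding agents with $u_{ij}\le\epsilon^2/n^2$ and those with $Q_{ij}u_{ij}<\epsilon\cdot OPT/n$ (losing a $(1-\epsilon)$ factor), the surviving $Q_{ij}$ lie in a polynomial range, so partitioning them into $O(\log(n/\epsilon))$ geometric buckets and returning the single best bucket $B_k$ \emph{in its entirety} loses only $O(\log n)$: monotonicity gives $u_i(B_k)\ge u_i(B_k\cap S^*)$ and cross-monotonicity gives $h_{ij}(B_k\cap S^*)\ge h_{ij}(S^*)$, and Pigeonhole finishes the bound.

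Part 2 has a more basic error. The balance condition does \emph{not} ``reduce to a linear equality in $y$'': the proportional shares $h_{ij}(\vec y_i)=\frac{s_{ij}y_{ij}}{\sum_k s_{ik}y_{ik}}\cdot u_i(\vec y_i)$ are ratios, so the balance equalities are nonconvex in $y$, which is precisely why the paper's continuous treatment in \cref{sec:continuous} still needs distributions (Borel measures) over $Y_i$, the multiplicative-weights outer loop, and an oracle that itself requires guessing the ratio $u_i(\vec y_i)/\sum_j s_{ij}y_{ij}$ and solving a convex program for each guess. Your proposed rounding is also problematic: Jensen applied to concave $f_i$ bounds the expected sampled welfare from \emph{above} by the fractional welfare, not from below, and the balance equalities depend nonlinearly on the realized set, so ``standard concentration'' does not restore $\epsilon$-balance. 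The paper sidesteps all of this by attacking the discrete oracle directly: since $f_i(x)/x$ is nonincreasing, agents with $Q_{ij}\le 0$ can be dropped; then guess $\phi\approx D(S^*)=\sum_{j\in S^*}s_{ij}$ in powers of $1+\epsilon$, solve the knapsack $\max_S\sum_{j\in S}Q_{ij}s_{ij}$ subject to $\sum_{j\in S}s_{ij}\le\phi$ to a $(1+\epsilon)$ factor, and return the guess maximizing $V(\phi)\cdot f_i(\phi)/\phi$. You should adopt these two oracle constructions; the outer LP machinery you describe is essentially fine.
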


Our results follow by writing the social welfare optimization problem as a Linear Program (LP) with exponentially many variables of the form $\{x_{iS}\}$. Since the number of feasibility constraints is $2n$, we use the multiplicative weight method to approximately solve it. This requires developing a dual oracle for the constraints, which for each agent $i$, is a constrained maximization problem over a weighted sum of $\{h_{ij}(S)\}$, and we need to find the set $S \subseteq X \setminus \{i\}$ that maximizes this weighted sum. We show approximation algorithms for this problem, leading to the proof of the above theorem. 

Further, in \cref{sec:continuous}, we show the following theorem (see \cref{sec:continuous} for the formal model):

\begin{theorem} [Proved in \cref{sec:continuous}]
\label{thm:cont0}
For \dataex with continuous concave utility functions and proportional sharing, for any $\epsilon \in (0,1)$, there is an algorithm running in time polynomial in the input size and $\frac{1}{\epsilon}$ and that finds a $(1+\epsilon)$ approximation to social welfare, while violating {\sc balance} by an additive $\epsilon$.
\end{theorem}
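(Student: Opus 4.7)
The plan is to adapt the multiplicative weights (MW) framework from \cref{thm:main} to the continuous setting, replacing the combinatorial inner oracle by a per-agent concave maximization. I would formulate the problem as maximizing $\sum_i u_i(\vec{y_i})$ subject to the $n$ balance equalities $u_i(\vec{y_i}) = \sum_{j \neq i} h_{ji}(\vec{y_j})$ with $\vec{y_i} \in [0,1]^{n-1}$. Since each side lies in $[0,1]$, it is consistent to relax every balance equality into the signed $\epsilon$-band of \eqref{eq:epsilon}, producing $2n$ one-sided linear constraints over a compact domain on which the objective is concave.

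At iteration $t$, attach multipliers $\lambda^{(t)} \in \mathbb{R}^n$ to the signed balance constraints. The Lagrangian
\[
L(\vec{y},\lambda^{(t)}) \;=\; \sum_i \Big[(1+\lambda_i^{(t)})\, u_i(\vec{y_i}) \;-\; \sum_{j} \lambda_j^{(t)}\, h_{ij}(\vec{y_i})\Big]
\]
decouples into $n$ independent per-agent problems, which constitute the MW oracle. Under proportional sharing, $h_{ij}(\vec{y_i}) = w_{ij} y_{ij}\, u_i(\vec{y_i})/Q_i(\vec{y_i})$ with $Q_i = \sum_k w_{ik} y_{ik}$; in the symmetric weighted setting $u_i(\vec{y_i}) = f_i(Q_i)$, so after substituting $z_{ij} = w_{ij} y_{ij}$ and $T = \sum_j z_{ij}$ the per-agent objective becomes $f_i(T)\big[(1+\lambda_i^{(t)}) - T^{-1}\sum_j \lambda_j^{(t)} z_{ij}\big]$. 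For fixed $T$ this is a bounded LP: minimize $\sum_j \lambda_j^{(t)} z_{ij}$ over $z_{ij} \in [0, w_{ij}]$ with $\sum_j z_{ij} = T$, solved by greedily saturating the coordinates of smallest $\lambda_j^{(t)}$. The scalar outer optimization over $T \in [0, \sum_j w_{ij}]$ is handled by sweeping on a grid of step $\epsilon$, giving the $\mathrm{poly}(1/\epsilon)$ overhead. Running MW for $\widetilde{O}(\log n/\epsilon^2)$ rounds and outputting the empirical distribution over iterates $\vec{y}^{(t)}$ then yields additive $\epsilon$-balance and multiplicative $(1+\epsilon)$ welfare in expectation, which is precisely the interim notion required.

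The main obstacle is implementing the per-agent oracle efficiently for general concave $u_i$ beyond the symmetric aggregate form $f_i(Q_i)$. The per-agent objective is a product of a concave $u_i$ and the affine-over-linear factor $(1+\lambda_i^{(t)}) - \sum_j \lambda_j^{(t)} w_{ij} y_{ij}/Q_i(\vec{y_i})$, which in general is neither concave nor convex in $\vec{y_i}$. My plan here is to fix the normalized proportion vector $\alpha_j = w_{ij} y_{ij}/Q_i$ (which freezes the non-concave factor) and optimize the scale $Q_i$ against $u_i$ along the ray $y_{ij} = Q_i \alpha_j / w_{ij}$, sweeping $\alpha$ on a suitable $\epsilon$-net over the simplex. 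Bounding the net size and verifying that a $(1+\epsilon)$-approximate oracle degrades the MW guarantees only by the standard proportional factor is the delicate step, and I expect it to drive the final dependence on $1/\epsilon$ in the runtime.
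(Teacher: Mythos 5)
Your overall architecture (multiplicative weights with a per-agent Lagrangian oracle of the form $\max_{\vec{y_i}} \sum_j Q_{ij} h_{ij}(\vec{y_i})$, which under proportional sharing becomes a linear factor times the ratio $u_i(\vec{y_i})/\sum_j s_{ij}y_{ij}$) matches the paper, and your treatment of the symmetric aggregate case $u_i = f_i(Q_i)$ via a fractional knapsack for each fixed total $T$ is sound. The genuine gap is exactly the step you flag as ``delicate'': for general concave $u_i$ you propose to sweep an $\epsilon$-net over the simplex of proportion vectors $\alpha_j = w_{ij}y_{ij}/Q_i$. That simplex is $(n-1)$-dimensional, so any $\epsilon$-net has size roughly $(1/\epsilon)^{n-1}$, which destroys polynomial running time. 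The theorem is claimed for \emph{general} continuous concave utilities, so this case is the whole content, and the proposal as written does not deliver it.

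The paper's fix is to reduce the non-convex product to a \emph{one-dimensional} guess rather than an $(n-1)$-dimensional one: guess the scalar value $V = u_i(\vec{y_i})/\sum_j s_{ij}y_{ij}$ of the ratio at the optimum (after arguing one may assume some coordinate of $\vec{y_i}$ equals $1$, by scaling, and handling the all-negative-$Q_{ij}$ case separately where $OPT=0$). For each guess $V$, one solves the convex program of maximizing the linear part $\sum_j Q_{ij}s_{ij}y_{ij}$ subject to $u_i(\vec{y_i}) \ge V\cdot\sum_j s_{ij}y_{ij}$; the constraint set is convex because $u_i$ is concave and the subtracted term is linear. Since the optimal $\vec{y_i^*}$ is feasible for the guess $V \in [V^*/(1+\epsilon), V^*]$, the returned objective is within $1+\epsilon$ of $OPT$. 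The binary search needs only polynomially many guesses because the assumption $u_i(\vec{y_i}) \ge \delta$ whenever some coordinate is $1$ pins $V$ into a range $[V_{\min},V_{\max}]$ of polynomial bit complexity. If you replace your simplex net with this scalar guess-and-convex-program step, the rest of your argument goes through.
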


Finally, in \cref{sec:imbalance}, we show that the same solution ideas extend to the case where the balance condition \cref{eqn:balance} can be violated by compensating/extracting payments from agents using a convex function of the extent of imbalance in the utilities. Such an approach can lead to much larger social welfare than the balanced case. 

\subsection{Core Stability and Strategyproofness} 
Stability is a widely studied notion in cooperative game theory, and seeks solutions that are robust to coalitional deviations. In our context, we have the following definition.

\begin{define}
\label{def:core}
    A feasible solution $\F$ to \dataex is {\em core stable} if there is no $U \subseteq X$ of users and another feasible solution $\F'$ just on the users in $U$ such that for all $i \in U$, $u_i(\F') > u_i(\F)$. A solution $\F$ is $c$-stable if there is no such $U$ with $|U| \le c$. 
\end{define}

In other words, suppose a coalition $U \subseteq X$ of agents deviates and trades just among themselves via a feasible solution $\F'$ so that all their utilities improve, then this coalition is {\em blocking}. A core solution has no blocking coalitions.

In \cref{sec:core}, we first show that regardless of the utility function and choice of sharing rule, there is always a feasible \dataex solution that is core-stable to an arbitrarily good approximation (\cref{thm:core}). This is a consequence of Scarf's lemma~\cite{scarf1967core} from cooperative game theory. Though it is unclear how to efficiently compute such a solution in general, we show an algorithm to find a $2$-stable solution via {\sc Greedy} maximal weight matching. 

We next study the trade-off between core and welfare. On the negative side, we show an instance in the symmetric weighted setting with proportional sharing, where any core solution has social welfare that is $\Omega(\sqrt{n})$ times smaller than the optimal social welfare (\cref{thm:core-neg}), showing the two concepts of core and welfare maximization can be far from each other. Nevertheless, we show (\cref{thm:tradeoff}) how to achieve approximate core-stability and social welfare simultaneously via randomizing between them. 

We finally consider strategic behavior by agents, where they hide either their tasks or data. We define feasible misreports in \cref{subsec:truth}, and again show that for the symmetric weighted setting, strategyproofness and approximate welfare maximization are simultaneously incompatible (\cref{thm:truth}). On the positive side, we show that a {\sc Greedy} cycle canceling algorithm that generalizes greedy matching is strategyproof.

\subsection{Comparison to Kidney Exchange} 
\label{subsec:compare}

It is instructive to compare the upper and lower bounds for social welfare in \cref{sec:prelim} with those for barter with non-replicable goods, that is, kidney exchange~\cite{AbrahamBS07,Roth-Kidney}. Note that there, trades are deterministic and hence intractability with long sequences of exchanges  follows from the hardness of set-packing type problems. On the other hand, our formulation of \dataexprob allows interim balance and its intractability (\cref{thm:hard}) is because of non-linear utility functions, making it technically very different. Similarly, the positive approximation results in \cref{thm:main} are very different from the $\frac{k+1}{3}$ approximation factor for length $k$-trades in kidney exchange, that follow from approximation algorithms for maximum set packing~\cite{Cygan13,Hal98,Sviri}. 

Similarly, for kidney exchange, stability and strategyproofness can be simultaneously achieved with Pareto-efficiency or maximizing match size; see~\cite{TTC,Roth-Kidney,Ashlagi-Kidney,HATFIELD} for positive results in various exchange models. However, for \dataexprob, we show in \cref{sec:core} that these goals are incompatible with welfare, mainly because of the non-linearity in utility functions. Nevertheless, the {\sc Greedy} matching rules from kidney exchange does extend to our setting and is simultaneously strategyproof and $2$-stable.
\section{NP-Hardness of Welfare Maximization: Proof of Theorem~\ref{thm:hard}}
\label{sec:hardness}
We now show that welfare maximization is {\sc NP-Hard} for submodular utilities and Shapley value sharing. We note that the hardness result holds even when for any $S$, both $u_i(S)$ and $h_{ij}(S)$ are computable in near-linear time.


\begin{proof} [Proof of \cref{thm:hard}]
We reduce from an instance of Exact Cover by 3-Sets (X3C). In this problem there are $m$ sets $P_1,\ldots,P_m$, each containing three elements. The universe $U$ has $3k$ elements $\{e'_1,e'_2,\ldots,e'_{3k}\}$ and the decision problem is whether there are $k$ disjoint sets from $P_1,\ldots,P_m$ that exactly cover $U$.

\begin{figure}
    \centering\includegraphics[width=\linewidth]{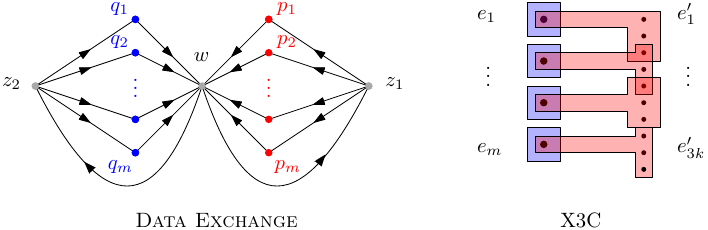}
    \caption{Construction for \cref{thm:hard}. The X3C instance has elements labelled. Blue boxes correspond to $Q_i$s and red boxes correspond to $P_i$s.}
    \label{fig:hardness-construction}
\end{figure}

We reduce from the X3C instance in the following way: Add $m$ dummy elements $e_1,\ldots,e_m$ to $U$. Modify each set $P_i$ so that it also includes the dummy element $e_i$. We also add $m$ new sets $Q_1,\ldots, Q_m$ where $Q_i=\{e_i\}$.

In the corresponding \dataex instance, we have two sets of agents: $\{p_i\}_{i\in [m]}$ and $\{q_i\}_{i\in [m]}$ and three special agents $w,z_1,z_2$. For each $i \in [m]$, we add the directed edges $(p_i,w)$, $(q_i,w)$, $(z_1,p_i)$,and $(z_2,q_i)$. We also have the edges $(w,z_1)$ and $(w,z_2)$. Agents can only send data along a directed edge.
See \cref{fig:hardness-construction}.

We have a one-to-one correspondence between the agents $p_i$ and the sets $P_i$, and the agents $q_i$ and the sets $Q_i$. Let $V = \{p_i\}_{i\in [m]}\cup \{q_i\}_{i\in [m]}$. For agent $a \in V$, let $\mbox{elem}(a)$ denote the set of elements in $U$ covered by the set corresponding to $a$. For $S\subseteq V$, let $\mbox{sets}(S) \subseteq \{P_1,\ldots,P_m,Q_1,\ldots,Q_m\}$ denote the the corresponding sets in the X3C instance.  Then the utility of agent $w$ is defined as\footnote{These utilities are not bounded above by $1$, but we can achieve this by simply scaling them down by the largest possible utility. This does not change the reduction.}
\[
u_w(S)= \text{Number of elements from $U$ covered by $\mbox{sets}(S)$}.
\]
The Shapley value implies the following: Given $S \subseteq V$, let element $e \in U$ be covered by $c_e(S)$ sets from $\mbox{sets}(S)$. Then each agent $a \in S$ whose corresponding set covers $e$ contributes utility $\frac{1}{c_e(S)}$. Then the utility share $h_{wa}(S)$ for $a \in S$ is 
$$h_{wa}(S) = \sum_{e \in \mbox{elem}(a)} \frac{1}{c_e(S)}$$

For every $i \in [m]$, we set $u_{p_i}(\{z_1\})=4,  u_{q_i}(\{z_2\})=1$. We also set $u_{z_1}(\{w\})=\frac{7k}{2}$ and  $u_{z_2}(\{w\})=m-\frac{k}{2}$. The Shapley values are trivial to define.

The decision problem is whether there exists a feasible \dataex solution with social welfare exactly $3(m+3k)$. We show that if the X3C instance is a \textbf{YES} instance, the welfare is exactly $3(m+3k)$, while for \textbf{NO} instances, the welfare is strictly smaller.

\paragraph{Case 1.} If the X3C instance is a \textbf{YES} instance, this means there are sets (w.l.o.g.) $P_1,\ldots,P_k$ that cover all elements from $\{e'_1, \ldots, e'_{3k}\}$. For the set $T=\{p_1,\ldots,p_k,q_1,\ldots,q_m\}$, set $x_{wT}=1$. Set $x_{p_i\{z_1\}}=\frac{7}{8}$ for all $i\in [k]$. Set $x_{q_i\{z_2\}}=\frac{1}{2}$ for $i\in [k]$ and $x_{q_i\{z_2\}}=1$ for $i\not\in[k]$. Further, set $x_{z_1\{w\}} =1$ and $x_{z_2\{w\}} = 1$. All other values of $x$ are set to zero. It can be checked that this is a feasible solution with total utility $3(m+3k)$.

\paragraph{Case 2.} Suppose we have a \textbf{NO} instance of X3C. This means any collection of $k$ sets covers less than $3k$ elements from $\{e'_1, \ldots, e'_{3k}\}$. Observe that $u_{z_1}(\{w\}) + u_{z_2}(\{w\}) = m+3k$, so that by the {\sc balance} condition on feasibility, we have 
$$\sum_S x_{wS} u_w(S) \le m+3k.$$ 
Also observe that from the balance constraint, the social welfare is at most thrice the utility that $w$ gets.
Therefore, to achieve a total utility of $3(m+3k)$, it must be that $w$ receives a utility of $m+3k$ and provides a utility of $\frac{7k}{2}$ to $z_1$ and $m-\frac{k}{2}$ to $z_2$. Since any set $T$ provides $w$ with utility at most $m+3k$ (as there are $m+3k$ elements in $U$), the previous statement implies that if $x_{wT}>0$ then $u_w(T)=m+3k$. 
Thus, each such $T$  must correspond to sets $S$ that cover all the elements in $\{e'_1, \ldots, e'_{3k}\}$. Since the original X3C instance was a \textbf{NO} instance, this means that each such $T$ must contain at least $k+1$ agents from $\{p_i\}_{i\in[m]}$.
But note that in any such $T$, the agents $q_1,\ldots,q_m$ can collectively only get a utility of at most
$m-\frac{k+1}{2}$, since each element $e_j$ covered by $Q_j$ and $P_j$, where $p_j \in T$ contributes utility $1/2$ to $Q_j$.  This bounds the expected utility that users in $q_1,\ldots,q_m$ receive by $m-\frac{k+1}{2}$. This then means that $z_2$ can collectively give (and hence receive) a utility of at most $m-\frac{k+1}{2}$, which is a contradiction. Therefore the total utility of the solution is strictly smaller than $3(m+ 3k)$, completing the reduction.
\end{proof}

\section{Algorithms for Welfare Maximization: Proof of Theorem~\ref{thm:main}}
\label{sec:approx}
In this section, we present approximation algorithms for welfare maximization. We present the overall framework in \cref{sec:mwm}, which reduces the problem to solving an oracle problem, one for each agent (\cref{eqn:oracle}), so that an approximation algorithm to the oracle translates to the same approximation to welfare maximization, while achieving $\epsilon$-balance (\cref{eq:epsilon}). We present the approximations to the oracle for submodular utilities with Shapley value in \cref{sec:oracle-general-case}, and for symmetric weighted concave utilities with proportional sharing in \cref{sec:oracle-concave}. We present an extension to continuous concave utilities with proportional sharing in \cref{sec:continuous}.

As mentioned before, the welfare maximization problem can be written as an exponential-sized LP, where the non-negative variables are $\{x_{iS}\}$; the objective is to maximize \cref{eqn:welfare} subject to the constraints \cref{eqn:probability-sum-1,eqn:balance}.


\subsection{Multiplicative Weight Algorithm}
\label{sec:mwm}
We solve this using the  multiplicative weights framework of  Plotkin, Shmoys, and Tardos (PST)~\cite{plotkinshmoystardos96}.  Since our final solution loses an additive $\epsilon$ in the balance constraints (\cref{eqn:balance}), we assume at the outset that these constraints are violated by an additive $\epsilon$, that is, \cref{eq:epsilon}. The problem with relaxed constraints can only have a larger objective value (social welfare). The relaxation helps us achieve polynomial running time.

\begin{lemma}\label{lemma:ignore-small-values}
    Let $OPT$ denote the optimal solution value to the instance with relaxed balance constraints. Then $OPT \ge \epsilon$.
\end{lemma}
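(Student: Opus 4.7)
The plan is to exhibit an explicit trivial feasible solution to the relaxed LP whose objective value is at least $\epsilon$. The intuition is that under the relaxed balance constraints, a single agent can ``absorb'' up to $\epsilon$ of net utility from others without needing to reciprocate, which already suffices to push welfare up to $\epsilon$ since utilities are normalized so that some agent achieves value $1$.

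Concretely, let $i^\ast \in X$ be an agent attaining $u_{i^\ast}(X \setminus \{i^\ast\}) = 1$, which exists by the normalization $\max_i u_i(X) = 1$ stated in the utility assumptions. Define the candidate solution
\[
x_{i^\ast,\, X \setminus \{i^\ast\}} = \epsilon, \qquad x_{iS} = 0 \text{ for all other pairs } (i,S).
\]
The probability constraint \cref{eqn:probability-sum-1} holds trivially since $\epsilon \le 1$. For the relaxed balance constraint \cref{eq:epsilon} at $i^\ast$, the utility $i^\ast$ receives is $\epsilon \cdot u_{i^\ast}(X \setminus \{i^\ast\}) = \epsilon$ while the utility $i^\ast$ contributes to others is $0$, so the imbalance is exactly $\epsilon$, which is permitted. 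For any $j \ne i^\ast$, the utility $j$ receives is $0$, while the utility $j$ contributes equals $\epsilon \cdot h_{i^\ast j}(X \setminus \{i^\ast\})$; since individual Shapley/proportional shares are nonnegative and sum to $u_{i^\ast}(X \setminus \{i^\ast\}) = 1$, we have $h_{i^\ast j}(X \setminus \{i^\ast\}) \le 1$, and the imbalance at $j$ is at most $\epsilon$.

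Finally, the social welfare \cref{eqn:welfare} of this solution is $\epsilon \cdot u_{i^\ast}(X \setminus \{i^\ast\}) = \epsilon$, so $OPT \ge \epsilon$. There is no real obstacle here; the only subtlety is confirming that the single nonzero variable simultaneously satisfies the imbalance tolerance at all $n$ agents, which follows immediately from the boundedness assumption $u_i(S) \in [0,1]$ together with the fact that shares are bounded by the total utility they decompose.
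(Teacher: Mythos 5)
Your proof is correct and follows exactly the paper's argument: take the agent $i^\ast$ achieving the normalized maximum utility $1$, set $x_{i^\ast,\,X\setminus\{i^\ast\}}=\epsilon$ and all other variables to zero, and observe the welfare is $\epsilon$. You additionally spell out the verification that the $\epsilon$-relaxed balance constraints hold at every agent, which the paper leaves implicit but which follows just as you say from nonnegativity and boundedness of the shares.
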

\begin{proof}
To see this, recall that we assumed $\max_i u_i(X)=1$. For the maximizer $i$, set $x_{iX}=\epsilon$ and set all other variables to zero. This gives us a guarantee that $OPT\geq \epsilon$. 
\end{proof}

Now, we try all objective values in powers of $(1+\epsilon)$ using binary search. Consider some guess $B$ for this value; we want to check if this value is feasible. By \cref{lemma:ignore-small-values} we assume that $B\geq \epsilon$. We therefore want to check the feasibility of the following LP, where the objective \cref{eqn:welfare} is encoded in \cref{eqn:lp_feas1}; the balance constraints \cref{eq:epsilon} is enconded in \cref{eqn:lp_feas2,eqn:lp_feas3}; and the probability constraint \cref{eqn:probability-sum-1} is encoded in \cref{eqn:lp_feas4}. Call this~\ref{lp:feasibility_general}$(B,\epsilon)$. Our final solution will correspond to the largest $B$ for which~\ref{lp:feasibility_general}$(B,\epsilon)$ is feasible.

\begin{minipage}{0.05\linewidth}
\begin{align}
    \tag{LP1}\label{lp:feasibility_general}
\end{align}    
\end{minipage}
\begin{minipage}{0.9\linewidth}
    \begin{align}
    \sum_{i,j,S} h_{ij}(S) x_{iS}&\geq B \label{eqn:lp_feas1} \\
    \forall~i, \sum_{j,S}h_{ij}
    (S)x_{iS}-\sum_{j,S|i\in S}h_{ji}(S)x_{jS}&\geq -\epsilon
    \label{eqn:lp_feas2}\\
    \forall~i, -\sum_{j,S}h_{ij}
    (S)x_{iS}+\sum_{j,S|i\in S}h_{ji}(S)x_{jS}&\geq -\epsilon
    \label{eqn:lp_feas3}\\
    \forall~i, \sum_{S}x_{iS}& \le 1 \label{eqn:lp_feas4}\\
    \forall~i,S, \sum_{S}x_{iS}& \geq 0 \label{eqn:lp_feas5}
\end{align}
\end{minipage}

We will use the PST framework to solve the feasibility of the above LP. 
Let \cref{eqn:lp_feas1,eqn:lp_feas2,eqn:lp_feas3} be represented by the coefficient matrices $A,b$ and let $P$ be the polytope of vectors satisfying~\cref{eqn:lp_feas4,eqn:lp_feas5}. We are testing whether $\exists? x\in P, Ax\geq b$. The PST framework  requires an oracle to solve $\max_{x\in P} p^{\top}Ax$ for arbitrary vectors $p \ge 0$. In our setting, this becomes
\begin{align*}
    \mbox{Oracle} = \max_{x\in P}\sum_{i,j,S} Q_{ij}h_{ij}(S)x_{iS}
\end{align*}
for possibly negative weights $Q_{ij}$. Since the constraints across $i$ are now independent, the maximum solution will select the optimum solution $S$ to \cref{eqn:oracle} and sets $x_{iS}=1$, for each $i$.
\begin{align}\label{eqn:oracle}
   \mbox{Oracle for agent } i =  \max_{S}\sum_{j\in S} Q_{ij}h_{ij}(S)
\end{align}
Using a similar proof as \cref{thm:hard}, it can be shown the Oracle problem is {\sc NP-Hard}. We will therefore develop approximation algorithms, and show two such algorithms in \cref{sec:oracle-general-case,sec:oracle-concave}.  As we show below, this will translate to an approximation for the social welfare. The overall algorithm is presented in \cref{alg:mwu_general}.

\begin{algorithm}[htbp]
\caption{Multiplicative Weights Update to solve \ref{lp:feasibility_general}.}\label{alg:mwu_general}
\begin{algorithmic}[1]
\STATE Choose parameters $\epsilon,\delta\leq 1$ and $\eta=\frac{\epsilon}{4n \alpha}$.
\STATE Try values for $B$ via in powers
of $(1+\delta)$.
\STATE Let $A\in \mathbb{R}^{(2n+1)\times n},b\in \mathbb{R}^{2n+1}$ denote the coefficients
of \ref{lp:feasibility_general}$(B,\epsilon)$.
\STATE Let $\mathbf{w}^{(1)}=\mathbf{1}^{2n+1}$.
\FOR{$t=1,\ldots,T=\frac{32n^2\alpha^2\log n}{\epsilon^2}$}
    \STATE Let $\mathbf{p}^{(t)}:=\frac{\mathbf{w}^{(t)}}{\sum_i w_i^{(t)}}$.
    \STATE Let $\mathbf{x}^{(t)}$ be the output of the $\alpha$-approximate oracle
    with input $\mathbf{p}^{(t)\top}A\mathbf{x}$.
    \IF {$\mathbf{p}^{(t)\top}A
    \mathbf{x}^{(t)}< \mathbf{p}^{(t)^\top} \frac{b}{\alpha}$}
    \STATE Return infeasible and decrease the guess for $B$.
    \ELSE
    \STATE $\mathbf{m}^{(t)}:=\frac{1}{\rho}(A\mathbf{x}^{(t)} -\frac{\mathbf{b}}{\alpha})$.
    \STATE $\forall~i,{w_i}^{(t+1)}:=w_i^{(t)}(1-\eta m_t^{(t)})$.
    \ENDIF
\ENDFOR
\STATE Return $\Bar{\mathbf{x}}=\frac{\sum_i \mathbf{x}^{(t)}}{T}$.
\end{algorithmic}
\end{algorithm}

\subsection{Analysis}
Suppose the multiplicative approximation ratio of the oracle  \cref{eqn:oracle} is $\alpha \ge 1$; this means the oracle subroutine finds a solution whose value is at least $OPT/\alpha$ when $OPT$ is the optimal solution to the oracle. Define $\rho$ be the maximum value that any of the constraints in $ Ax\geq b, x \in P$ can be additively violated. Since we assume $u_i(X) \le 1$ for all $i$, it is clear that $\rho=\sum_i u_i(X) \le n$.

Our main theorem is the following.

\begin{theorem}\label{thm:approximate-mwa}
Suppose the oracle problem \cref{eqn:oracle}  can be solved to a multiplicative approximation factor of $\alpha$. Then,   with $O(\frac{n^2\alpha^2\log n}{\epsilon^2})$ calls to the oracle subproblem and $O(n)$ time overhead per call to the oracle, \cref{alg:mwu_general} returns a solution
$\mathbf{x}$ that satisfies \cref{eqn:lp_feas2,eqn:lp_feas3,eqn:lp_feas4,eqn:lp_feas5} and that satisfies:
\begin{align*}
    \sum_{i,j,S} h_{ij}(S) x_{iS}&\geq \frac{OPT}{2\alpha(1+3\delta)}. 
\end{align*}
\end{theorem}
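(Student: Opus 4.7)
The proof is a direct application of the Plotkin--Shmoys--Tardos (PST) multiplicative weights framework for LP feasibility, instantiated with the approximate oracle in \cref{eqn:oracle} and the width bound that arises from the \dataex problem structure. My plan is to first massage the per-iteration maximization into the separable form that the oracle solves, then apply the standard PST convergence guarantee, and finally combine the per-row error bounds with the binary search over $B$ to extract the claimed $\frac{OPT}{2\alpha(1+3\delta)}$ ratio.

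First I would verify that the oracle subroutine suffices to simulate the PST inner step. Writing the constraint system \cref{eqn:lp_feas1}--\cref{eqn:lp_feas3} in the form $Ax\ge b$ and normalizing $\mathbf{p}^{(t)}=\mathbf{w}^{(t)}/\|\mathbf{w}^{(t)}\|_1$, the quantity $\mathbf{p}^{(t)\top}A x$ decomposes as $\sum_i \sum_S x_{iS} \sum_{j\in S} Q_{ij}^{(t)} h_{ij}(S)$ for coefficients $Q_{ij}^{(t)}$ obtained by summing the signed contributions of $h_{ij}(S)$ across the objective row and the two balance rows for $i$ and $j$. Because each agent $i$ only has the budget constraint $\sum_S x_{iS}\le 1$ in $P$, the maximizer places its full mass on a single set $S_i^\star$, namely the maximizer of \cref{eqn:oracle}. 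Thus one call to the $\alpha$-approximate oracle per agent (i.e.\ $n$ calls, with $O(1)$ overhead each) produces a vector $\mathbf{x}^{(t)}\in P$ whose value is within an $\alpha$ factor of $\max_{x\in P}\mathbf{p}^{(t)\top}A x$. If $\mathbf{p}^{(t)\top}A \mathbf{x}^{(t)} < \mathbf{p}^{(t)\top}b/\alpha$, then any feasible $x^\star$ would satisfy $\mathbf{p}^{(t)\top}A x^\star \ge \mathbf{p}^{(t)\top}b$, and its $1/\alpha$-fraction would exceed what the oracle produced, so the LP must be infeasible; this validates the infeasibility branch of \cref{alg:mwu_general}.

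Next I would bound the width $\rho$. Since $x\in P$ implies $\sum_S x_{iS}\le 1$ and since $\max_i u_i(X)=1$, every row of $Ax$ for $x\in P$ has magnitude at most $\sum_i u_i(X) \le n$ for the objective row, and $2u_i(X)\le 2$ for each balance row, so $\rho\le n$. Plugging $\rho\le n$ and $m=2n+1$ into the standard PST analysis with learning rate $\eta=\epsilon/(4n\alpha)$ and $T=\Theta(n^2\alpha^2\log n/\epsilon^2)$ iterations gives, by the usual potential argument on $\Phi^{(t)}=\sum_i w_i^{(t)}$ (using $(1-\eta z)\le e^{-\eta z}$ on one side and $(1-\eta z)\ge e^{-\eta z - \eta^2 z^2}$ on the other, and $|m_i^{(t)}|\le 1$), the averaged vector $\bar{\mathbf{x}}=\frac{1}{T}\sum_t \mathbf{x}^{(t)}$ satisfies
\begin{equation*}
A\bar{\mathbf{x}} \;\ge\; \frac{b}{\alpha} \;-\; \epsilon\cdot \mathbf{1}.
\end{equation*}
Reading this row by row gives (i) balance violations bounded by $\epsilon/\alpha + \epsilon \le 2\epsilon$ (which, after rescaling $\epsilon\mapsto\epsilon/2$ at the outset, recovers the exact $\epsilon$-balance required by \cref{eqn:lp_feas2,eqn:lp_feas3}), (ii) the probability and non-negativity constraints \cref{eqn:lp_feas4,eqn:lp_feas5} because $\bar{\mathbf{x}}$ is a convex combination of points in $P$, and (iii) the objective lower bound $\sum_{i,j,S} h_{ij}(S)\bar{x}_{iS}\ge B/\alpha - \epsilon$.

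Finally I would handle the binary search and clean up the constants. Taking the largest $B$ in the geometric grid $\{(1+\delta)^k\}$ for which the algorithm does not declare infeasibility, we have $B\ge OPT/(1+\delta)$, since the true $OPT$ itself (or the next grid point below it) yields a feasible LP. Combining with the row-1 guarantee,
\begin{equation*}
\sum_{i,j,S} h_{ij}(S)\bar{x}_{iS} \;\ge\; \frac{OPT}{\alpha(1+\delta)} - \epsilon.
\end{equation*}
By \cref{lemma:ignore-small-values}, $OPT\ge \epsilon$, so the additive $\epsilon$ term can be absorbed: writing $\epsilon \le OPT$ and choosing $\delta$ small enough that $\frac{1}{\alpha(1+\delta)} - \frac{\epsilon}{OPT}\ge \frac{1}{2\alpha(1+3\delta)}$ (which holds for the declared range of $\epsilon,\delta\le 1$) yields the stated bound $OPT/(2\alpha(1+3\delta))$. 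The main technical obstacle is precisely this last bookkeeping step: the PST guarantee mixes multiplicative loss $\alpha$ with additive loss $\epsilon$, and one must both rescale $\epsilon$ by a constant to recover the literal form of \cref{eqn:lp_feas2,eqn:lp_feas3} and use the lower bound $OPT\ge\epsilon$ to convert the residual additive slack into the clean factor of two appearing in the theorem.
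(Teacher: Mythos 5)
Your overall route is the same as the paper's: run PST/MWU on the feasibility LP for each guess $B$, use the $\alpha$-approximate separable oracle for the inner maximization (with the soundness of the infeasibility test exactly as you argue it), bound the width by $\rho\le n$, average the iterates, and absorb the additive slack using $OPT\ge\epsilon$ from \cref{lemma:ignore-small-values} together with the geometric search over $B$. The one step that does not go through as written is the final absorption. You state the convergence guarantee as $A\bar{\mathbf{x}}\ge b/\alpha-\epsilon\cdot\mathbf{1}$ and then claim that $\frac{1}{\alpha(1+\delta)}-\frac{\epsilon}{OPT}\ge\frac{1}{2\alpha(1+3\delta)}$ holds throughout the declared parameter range. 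It does not: \cref{lemma:ignore-small-values} only gives $\epsilon/OPT\le 1$, so for any $\alpha>1$ the left side can be negative (take $OPT=\epsilon$ and $\delta\to 0$: the left side is roughly $1/\alpha-1<0$ while the right side is roughly $1/(2\alpha)>0$), and rescaling $\epsilon$ by a constant at the outset cannot repair this when $\alpha=\Theta(\log n)$. The missing ingredient is that the additive slack must scale as $\epsilon/\alpha$, not $\epsilon$; this is precisely why \cref{alg:mwu_general} sets $\eta=\frac{\epsilon}{4n\alpha}$ with the extra $\alpha$ in the denominator. With $\eta=\frac{\epsilon}{4n\alpha}$ and $T=\frac{32n^2\alpha^2\log n}{\epsilon^2}$ the regret bound yields the sharper guarantee $A_i\bar{\mathbf{x}}\ge \frac{b_i}{\alpha}-\frac{\epsilon}{2\alpha}$, after which the objective row gives $\frac{B}{\alpha}-\frac{\epsilon}{2\alpha}\ge\frac{B}{\alpha}\bigl(1-\frac{1+\delta}{2}\bigr)=\frac{B(1-\delta)}{2\alpha}\ge\frac{OPT}{2\alpha(1+3\delta)}$ for $\delta\le\frac{1}{3}$, using $\epsilon\le B(1+\delta)$ and $B\ge OPT/(1+\delta)$. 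This is exactly how the paper closes the argument; the rest of your proof matches it.
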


To prove this theorem, we require a result from~\cite{AroraHK_MWA}. 

\begin{lemma}[Theorem~2.1 in \cite{AroraHK_MWA}]
    After $T$ rounds in \cref{alg:mwu_general}, for every $i$,
    \begin{equation}
    \sum_{t=1}^T \mathbf{m}^{(t)}\cdot \mathbf{p}^{(t)} \leq \sum_{t=1}^T m_i^{(t)}+\eta \sum_{t=1}^{T}\abs{m_i^{(t)}}+\frac{2\log n}{\eta}.\label{eqn:mwu-regret-bound}
    \end{equation}
\end{lemma}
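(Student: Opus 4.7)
The plan is to prove this via the standard potential-function argument for multiplicative weights. I would introduce the potential $\Phi^{(t)} := \sum_{j=1}^{2n+1} w_j^{(t)}$, noting that $\Phi^{(1)} = 2n+1$ (or $n$ up to relabeling the dimension; the bound in the lemma treats this as $n$). Using the weight update rule from \cref{alg:mwu_general}, I would establish the one-step recurrence
\begin{equation*}
\Phi^{(t+1)} = \sum_j w_j^{(t)}(1 - \eta m_j^{(t)}) = \Phi^{(t)} \Bigl(1 - \eta \sum_j p_j^{(t)} m_j^{(t)}\Bigr) = \Phi^{(t)}\bigl(1 - \eta \,\mathbf{m}^{(t)}\cdot \mathbf{p}^{(t)}\bigr),
\end{equation*}
since $p_j^{(t)} = w_j^{(t)}/\Phi^{(t)}$. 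Unrolling gives $\Phi^{(T+1)} = \Phi^{(1)} \prod_t (1 - \eta\, \mathbf{m}^{(t)}\cdot\mathbf{p}^{(t)})$, and applying $1 - x \le e^{-x}$ yields the upper bound $\Phi^{(T+1)} \le n \exp\bigl(-\eta \sum_t \mathbf{m}^{(t)}\cdot\mathbf{p}^{(t)}\bigr)$.

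Next I would lower-bound $\Phi^{(T+1)}$ by a single coordinate: $\Phi^{(T+1)} \ge w_i^{(T+1)} = \prod_t (1 - \eta m_i^{(t)})$. To take logs coordinate-wise, I would split the rounds into $P = \{t : m_i^{(t)} \ge 0\}$ and $N = \{t : m_i^{(t)} < 0\}$. Assuming $|\eta m_i^{(t)}| \le 1/2$ (which follows from the choice $\eta = \epsilon/(4n\alpha)$ and the fact that the scaled residuals $m_i^{(t)}$ lie in $[-1,1]$ thanks to the $\rho$-normalization in the algorithm), I would use the two Taylor estimates
\begin{equation*}
\ln(1 - x) \ge -x - x^2 \quad\text{for } x \in [0, 1/2], \qquad \ln(1 + x) \ge x - x^2 \quad\text{for } x \in [0,1/2].
\end{equation*}
For $t \in P$ I apply the first with $x = \eta m_i^{(t)}$, and for $t \in N$ I apply the second with $x = \eta |m_i^{(t)}|$, giving
\begin{equation*}
\ln w_i^{(T+1)} \;\ge\; -\eta \sum_{t\in P} m_i^{(t)} - \eta^2 \sum_{t\in P} (m_i^{(t)})^2 \;+\; \eta \sum_{t\in N} |m_i^{(t)}| - \eta^2 \sum_{t\in N} (m_i^{(t)})^2.
\end{equation*}
Using $\sum_{t\in N} |m_i^{(t)}| = -\sum_{t \in N} m_i^{(t)}$ and $(m_i^{(t)})^2 \le |m_i^{(t)}|$ (since $|m_i^{(t)}| \le 1$), the right-hand side collapses to $-\eta \sum_t m_i^{(t)} - \eta^2 \sum_t |m_i^{(t)}|$.

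Finally I would chain the two bounds: $-\eta \sum_t m_i^{(t)} - \eta^2 \sum_t |m_i^{(t)}| \le \ln w_i^{(T+1)} \le \ln \Phi^{(T+1)} \le \ln n - \eta \sum_t \mathbf{m}^{(t)}\cdot \mathbf{p}^{(t)}$, and divide through by $\eta > 0$ and rearrange to obtain \cref{eqn:mwu-regret-bound} (the factor $2\log n$ versus $\log n$ is absorbed by the loose upper bound $\Phi^{(1)} \le n^2$ on the initial potential for the $2n+1$ coordinates, or alternatively by using $\ln(1-x) \ge -x - x^2$ in a slightly weaker form). The main technical subtlety will be verifying that $|\eta m_i^{(t)}| \le 1/2$ throughout — this is where the choice of $\eta$ relative to the width $\rho$ enters, and it is the step that forces the specific normalization $\mathbf{m}^{(t)} = (A\mathbf{x}^{(t)} - b/\alpha)/\rho$ in \cref{alg:mwu_general}.
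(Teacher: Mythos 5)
Your proposal is correct: the paper does not prove this lemma itself but imports it verbatim as Theorem~2.1 of the Arora--Hazan--Kale multiplicative-weights survey, and your potential-function argument (one-step recurrence on $\Phi^{(t)}$, the $1-x\le e^{-x}$ upper bound, the single-coordinate lower bound split over rounds with positive and negative $m_i^{(t)}$, and the $(m_i^{(t)})^2\le\abs{m_i^{(t)}}$ collapse) is precisely the proof given there, with the $\tfrac{2\log n}{\eta}$ constant correctly accounted for by the $2n+1$ experts and the check that $\abs{\eta m_i^{(t)}}\le 1/2$ via the $\rho$-normalization and $\eta=\epsilon/(4n\alpha)$. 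No gaps.
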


\begin{proof}[Proof of \cref{thm:approximate-mwa}]
Suppose the algorithm did $T$ iterations without declaring infeasibility.
Since the algorithm did not declare it infeasible, then we have that
\[
\mathbf{p}^{(t)\top}A
    \mathbf{x}^{(t)}\geq \mathbf{p}^{(t)^\top} \frac{b}{\alpha}
\]

\noindent for every time step $t$. Thus, the left hand side of \cref{eqn:mwu-regret-bound} is non-negative. 
\begin{align*}
0 &\leq \sum_{t=1}^T m_i^{(t)}+\eta \sum_{t=1}^{T}\abs{m_i^{(t)}}+\frac{2\log n}{\eta}\\
&= \frac{1}{n}\sum_{t=1}^T (A_i \mathbf{x}^{(t)}-\frac{b_i}{\alpha}) +\eta T +\frac{2\log n}{\eta}
\end{align*}
Dividing by $T$, and choosing $\eta=\frac{\epsilon}{4n\alpha}$ and $T=\frac{32n^2\alpha^2\log n}{\epsilon^2}$, we get
\[
A_i \mathbf{\Bar{x}}\geq \frac{b_i}{\alpha}-\eta n-\frac{2n \log n}{\eta T}
\implies A_i \mathbf{\Bar{x}}\geq \frac{b_i}{\alpha}-\frac{\epsilon}{2\alpha}.
\]

The theorem statement then follows 
by choosing $\delta\leq \frac{1}{3}$
and with the observation that for some guess $B$ for the optimal value, 
we have
$B\geq \frac{OPT}{1+\delta}\geq \frac{\epsilon}{1+\delta}$.
\end{proof}



\subsection{Oracle for Cross-monotonic Sharing}
\label{sec:oracle-general-case}
We now consider the case where $h_{ij}(S)$ is cross-monotonic in $S$, and $u_i(S)$ is a non-decreasing submodular set function.  Note that cross-monotonicity captures the Shapley value. We will present a $O(\log n)$ approximation to the oracle (\cref{eqn:oracle}) for this setting, which when combined with \cref{thm:approximate-mwa}, completes the proof of the first part of \cref{thm:main}. The key hurdle with devising an approximation algorithm is that the quantities $Q_{ij}$ in \cref{eqn:oracle} can be negative; we show this is not an issue for cross-monotonic sharing.

\paragraph{Simplifying the {\sc data exchange} problem.} Before considering the oracle problem (\cref{eqn:oracle}), we consider the overall {\sc data exchange} problem (\cref{eqn:lp_feas1,eqn:lp_feas2,eqn:lp_feas3,eqn:lp_feas4,eqn:lp_feas5}) and show some bounds for it.  Let $u_{ij}:=h_{ij}(\{j\}) = u_i(\{j\})$. Note that by cross-monotonicity, we have $h_{ij}(S) \le u_{ij}$ for all $j \in S$.



\begin{lemma}
\label{lem:ignore2}
By losing a multiplicative factor of $(1-\epsilon)$ in social welfare, for every $i$, we can set $x_{iS} = 0$ for any $S$ that contain some $j$ such that $u_{ij} := h_{ij}(\{j\})\leq \frac{\epsilon^2}{n^2}$.
\end{lemma}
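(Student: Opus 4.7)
The plan is to take an optimal solution $x^*$ to \ref{lp:feasibility_general}$(B,\epsilon)$ with welfare $OPT$ and construct a new feasible solution $\tilde x$, supported only on sets containing no ``small'' agents, with welfare at least $(1-\epsilon)\cdot OPT$. For each agent $i$, let $T_i = \{j \neq i : u_{ij} \le \epsilon^2/n^2\}$. The construction simply projects each support set onto its non-small part: for every $(i,S)$ with $x^*_{iS} > 0$, we move its mass to $S' := S \setminus T_i$, i.e.\ set $\tilde x_{iS'} := \sum_{S:\, S \setminus T_i = S'} x^*_{iS}$. By construction $\tilde x$ is supported only on sets with no small contributors, and $\sum_{S'} \tilde x_{iS'} = \sum_S x^*_{iS} \le 1$.

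The first step is to bound the welfare loss from projection. Efficiency $\sum_{k \in S} h_{ik}(S) = u_i(S)$ combined with cross-monotonicity $h_{ik}(S \setminus \{j\}) \ge h_{ik}(S)$ for $k \in S \setminus \{j\}$ gives $u_i(S) - u_i(S \setminus \{j\}) \le h_{ij}(S) \le u_{ij}$. Iterating this while removing the elements of $T_i \cap S$ one at a time yields $u_i(S) - u_i(S \setminus T_i) \le |T_i \cap S|\cdot(\epsilon^2/n^2) \le \epsilon^2/n$. Aggregating over $i$ using $\sum_S x^*_{iS}\le 1$ bounds the total welfare drop by $\epsilon^2$. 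Since \cref{lemma:ignore-small-values} gives $OPT \ge \epsilon$, this is at most $\epsilon\cdot OPT$, yielding the claimed $(1-\epsilon)$ multiplicative factor.

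It remains to verify that $\tilde x$ still satisfies $\epsilon$-balance. For agent $i$, the LHS of its balance constraint drops by at most $\epsilon^2/n$ by the computation above. The RHS changes in two ways: (a) whenever $i \in T_j$, $i$'s entire contribution to $j$ is removed, giving a total loss bounded by $\sum_{j:\, i \in T_j} u_{ji} \le n \cdot (\epsilon^2/n^2) = \epsilon^2/n$; (b) whenever $i \notin T_j$, cross-monotonicity may inflate $h_{ji}(S \setminus T_j)$ above $h_{ji}(S)$. The anticipated obstacle is (b): a naive per-term bound $h_{ji}(S\setminus T_j) \le u_{ji}$ is far too weak once summed over all $j$. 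The key trick is a per-coalition conservation identity,
\[
\sum_{i \in S \setminus T_j}\bigl[h_{ji}(S \setminus T_j) - h_{ji}(S)\bigr] \;=\; \bigl[u_j(S \setminus T_j) - u_j(S)\bigr] \,+\, \sum_{i \in T_j \cap S} h_{ji}(S) \;\le\; \epsilon^2/n,
\]
where the first bracket is non-positive by monotonicity and the second sum is bounded by $|T_j\cap S|\cdot(\epsilon^2/n^2)$ using $h_{ji}(S)\le u_{ji}$. Since every summand on the left is non-negative by cross-monotonicity, each individual share-increase is at most $\epsilon^2/n$, and summing over $j$ and $S \ni i$ bounds the RHS inflation by $\epsilon^2$ per agent. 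Therefore each balance constraint is violated by at most $\epsilon + O(\epsilon^2)$, which we absorb back into $\epsilon$-balance by rescaling $\epsilon$ by a constant factor at the start of the algorithm.
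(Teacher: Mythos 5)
Your construction and welfare accounting match the paper's proof: project each support set onto its non-small part, use cross-monotonicity (plus $h_{ij}(S)\le h_{ij}(\{j\})$ for the small contributors) to bound the per-agent utility drop by $\epsilon^2/n$, and convert to a multiplicative $(1-\epsilon)$ loss via $OPT\ge\epsilon$ from \cref{lemma:ignore-small-values}. The one genuine difference is that you also verify the projected solution still approximately satisfies {\sc balance}; the paper's proof tracks only the objective and leaves feasibility of the modified solution implicit. Your per-coalition conservation identity is correct and closes that gap cleanly: the inflation of $h_{ji}(S\setminus T_j)$ over $h_{ji}(S)$ for surviving contributors is paid for exactly by the removed small shares plus the non-positive change in $u_j$, so each agent's outgoing flow rises by at most $\epsilon^2$, which is absorbed into $\epsilon$-{\sc balance}. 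In short, your write-up is a strict superset of the paper's argument.
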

\begin{proof}
Fix some $i$. Let 
$ S_{\mathrm{small}} = \left\{j \ |  \  h_{ij}(\{j\})\leq \frac{\epsilon^2}{n^2}  \right\}. $
Consider any solution $\mathbf{x}$. We claim that modifying $\mathbf{x}$ such that we add the value of $x_{iS}$ to $x_{iS\setminus S_{\mathrm{small}} }$, and set $x_{iS} = 0$ only loses $(1-\epsilon)$ factor in the objective.  Since the utility sharing rule is cross-monotone, for any set $S$ we have $h_{ij}(S\setminus S_{\mathrm{small}}) \ge h_{ij}(S)$ for all $j \in S \setminus S_{\mathrm{small}}$. Further, we have $h_{ij}(S_{\mathrm{small}}) \le h_{ij}(\{j\})$ for all $j \in S_{\mathrm{small}}$.  
Therefore, we have
\begin{align*}
u_{i}(S)=\sum_{j\in S} h_{ij}(S)  = & \sum_{j \in S_{\mathrm{small}}}  h_{ij}(S) + \sum_{j \in S \setminus S_{\mathrm{small}}} h_{ij}(S) \\
 \le &\sum_{j \in S_{\mathrm{small}}}  h_{ij}(\{j\}) + \sum_{j \in S \setminus S_{\mathrm{small}}} h_{ij}(S \setminus S_{\mathrm{small}}) \\
 \leq & \frac{\epsilon^2}{n} + u_i(S\setminus S_{\mathrm{small}}).
\end{align*}
Adding up the losses, we lose a $\frac{\epsilon^2}{n}$ for each user $i$,
leading to a loss of $\epsilon^2$ overall. By \cref{lemma:ignore-small-values}, the initial optimum was at least $\epsilon$. We therefore lose a factor of at most $(1-\epsilon)$ in social welfare.
\end{proof}

We therefore assume $x_{iS} = 0$ for all $S$ s.t. $j \in S$ and $u_{ij} < \frac{\epsilon^2}{n^2}$.

\paragraph{Approximating the Oracle.} For agent $i$, let 
\[
S^*=\arg \max \sum_{j\in S} Q_{ij}h_{ij}(S) \qquad  OPT=\sum_{j\in S^*} Q_{ij}h_{ij}(S^*).
\] 
For given $\epsilon > 0$, the algorithm works as follows:

\begin{enumerate}
\item Guess $OPT$ in powers of $(1+\epsilon)$ by binary search.
\item For constant $\delta = e-1$, divide the agents into buckets based on the $Q_{ij}$ value. The $k^{th}$ bucket $B_k$ is defined as 
\[ B_k = \{j \mid  Q_{ij}\in \left(u_0(1+\delta)^k,u_0(1+\delta)^{k+1}\right]\}
\]
where $u_0=\frac{\epsilon \cdot OPT }{n}$
and
$k\in \{ 0,1,\ldots, 3\lceil\log_{1+\delta}(\frac{n}{\epsilon}) \rceil -1 \}$.
\item For each bucket $B_k$, let $V_k = \sum_{j \in B_k} Q_{ij} h_{ij}(B_k)$.
\item For this guess of $OPT$, the final solution is $S_z$ where $z = \mbox{argmax}_{k} V_k$. 
\item The solution is valid for this value of $OPT$ if $V_z \ge OPT/\hat{\alpha}$, where $\hat{\alpha} = 3e(1+3\epsilon) \ln n$. We use the largest $OPT$ for which the solution returned is valid, and return this solution.
\end{enumerate}

In the analysis below, we assume $OPT$ can be precisely guessed.

\begin{theorem}
For $\epsilon > 0$, when utilities $u_i(S)$ are monotone non-decreasing in $S$ and the utility sharing rule is cross-monotone, the {\sc Oracle} problem can be  approximated to factor $\alpha \le 3e(1+2\epsilon) \ln n$ in $O(\frac{n\log n}{\log (1+\epsilon)})$ time and correspondingly many calls to {\sc MLSub}.
\end{theorem}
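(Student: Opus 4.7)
My plan is to analyze the five-step bucketing procedure directly and match its output against the optimal Oracle value \(OPT=\sum_{j\in S^{*}}Q_{ij}h_{ij}(S^{*})\). The first step is to argue that we may assume \(S^{*}\subseteq\{j:Q_{ij}>0\}\). If \(S'=S^{*}\setminus\{j:Q_{ij}\le 0\}\), then cross-monotonicity gives \(h_{ij}(S')\ge h_{ij}(S^{*})\) for every \(j\in S'\); since each such \(Q_{ij}\) is positive, the sum on \(S'\) can only increase by this replacement, and the removed negative-weight terms only add non-negative quantities. So I can work with a positive-weight \(S^{*}\) for free.

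The second step is to account for agents outside any bucket. The \(Q_{ij}\) coefficient produced by the multiplicative-weights outer loop is a signed linear combination of a constant number of dual weights, each in \([0,1]\); hence \(|Q_{ij}|\) is bounded by an absolute constant, which makes the top bucket threshold \(u_0(1+\delta)^{K}\ge n^{2}OPT/\epsilon^{2}\) strictly larger than any achievable \(Q_{ij}\), so no positive-\(Q\) agent lies above the top bucket. For agents with \(Q_{ij}\le u_{0}\), their joint contribution to \(OPT\) is at most \(u_{0}\sum_{j} h_{ij}(S^{*}) = u_{0}\,u_{i}(S^{*})\le u_0\le \epsilon\,OPT/n\). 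Thus at least a \((1-\epsilon/n)\)-fraction of \(OPT\) comes from agents partitioned into the \(K=3\lceil\log_{1+\delta}(n/\epsilon)\rceil\) buckets.

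The third and most technical step is a per-bucket comparison. Writing \(S^{*}_{k}=S^{*}\cap B_{k}\), I want \(V_{k}\ge \tfrac{1}{1+\delta}\sum_{j\in S^{*}_{k}}Q_{ij}h_{ij}(S^{*})\). Pulling \(Q_{ij}\) out of \(V_k\) using the bucket width \((1+\delta)\), I get \(V_{k}\ge Q^{*}_{k}\,u_{i}(B_{k})\), where \(Q^{*}_{k}=u_{0}(1+\delta)^{k}\). Monotonicity of \(u_i\) (and the fact that \(S^{*}_{k}\subseteq B_{k}\)) gives \(u_{i}(B_{k})\ge u_{i}(S^{*}_{k})\), and cross-monotonicity yields
\[
u_{i}(S^{*}_{k})=\sum_{j\in S^{*}_{k}}h_{ij}(S^{*}_{k})\ge \sum_{j\in S^{*}_{k}}h_{ij}(S^{*}).
\]
Finally, since \(Q_{ij}\le (1+\delta)Q^{*}_{k}\) on bucket \(k\), this chain of inequalities gives the desired \((1+\delta)^{-1}\) bucket-by-bucket guarantee. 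This cross-monotonicity-plus-monotonicity chain is the key (and, I expect, the only delicate) step, because it is what lets a single bucket \(B_k\) serve as a proxy for the slice \(S^{*}_{k}\) of the unknown optimum.

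Putting the steps together, \(\sum_{k}V_{k}\ge \tfrac{1}{1+\delta}(1-\epsilon/n)OPT\), and pigeonholing over the \(K\) buckets produces an index \(z\) with \(V_{z}\ge \tfrac{(1-\epsilon/n)OPT}{(1+\delta)K}\). With \(\delta=e-1\) and \(K=O(\log(n/\epsilon))\), this gives the claimed approximation factor \(\alpha\le 3e(1+2\epsilon)\ln n\) after the usual \(1/(1-\epsilon)\le 1+2\epsilon\) slack. The binary-search guess for \(OPT\) costs an \(O(\log n/\log(1+\epsilon))\) multiplicative factor; each inner iteration performs one {\sc MLSub} call per bucket, yielding the stated \(O(n\log n/\log(1+\epsilon))\) total oracle calls. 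Combined with \cref{thm:approximate-mwa}, this completes the first bullet of \cref{thm:main}.
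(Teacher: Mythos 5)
Your proof follows essentially the same route as the paper: eliminate negative-weight agents via cross-monotonicity, discard agents whose contribution is negligible, and then run the chain $V_k \ge Q^*_k\, u_i(B_k) \ge Q^*_k\, u_i(S^*\cap B_k) \ge Q^*_k \sum_{j\in S^*\cap B_k} h_{ij}(S^*) \ge \frac{1}{1+\delta}\sum_{j \in S^*\cap B_k} Q_{ij}h_{ij}(S^*)$, which is exactly the paper's key step. Your variant of summing the per-bucket bound over all $k$ and pigeonholing at the end (rather than pigeonholing first and analyzing one bucket) is equivalent, and your handling of the low-$Q$ agents via $u_0\sum_j h_{ij}(S^*) \le u_0\, u_i(S^*) \le \epsilon\, OPT/n$ is a clean alternative to the paper's $S_{\mathrm{small}}$ argument. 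The runtime accounting also matches.

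The one step that does not hold as written is your justification that no positive-weight agent lies above the top bucket. You argue that $|Q_{ij}|$ is bounded by an absolute constant (true, since the dual weights form a probability vector), and conclude that the top threshold $u_0(1+\delta)^K = OPT\cdot n^2/\epsilon^2$ exceeds every $Q_{ij}$. But that comparison only goes through if $OPT \ge \Omega(\epsilon^2/n^2)$, and the oracle's $OPT$ for a given agent and a given setting of dual weights has no such lower bound. The correct argument, which the paper uses, is intrinsic to the oracle instance: the singleton $\{j\}$ is feasible, so $Q_{ij}u_{ij} \le OPT$ for every $j$, and after the preprocessing step that discards agents with $u_{ij} < \epsilon^2/n^2$ (the paper's \cref{lem:ignore2}, which you should invoke explicitly since your bucketing range depends on it), this gives $Q_{ij} \le OPT\cdot n^2/\epsilon^2$ for every surviving $j$ --- exactly the top threshold. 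With that one-line replacement your proof is complete and matches the paper's.
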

\begin{proof}
Let $S_0 = \{j \in S^* | Q_{ij} < 0 \}$.  We have:
\begin{align*} 
\sum_{j\in S^*} Q_{ij}h_{ij}(S^*)  = & \sum_{j \in S_0}  Q_{ij}h_{ij}(S^*) + \sum_{j \in S^* \setminus S_0} Q_{ij}h_{ij}(S^*) \\
 \le & \sum_{j \in S^* \setminus S_0} Q_{ij}h_{ij}(S^*) 
 \le  \sum_{j \in S^* \setminus S_0} Q_{ij}h_{ij}(S^* \setminus S_0).
\end{align*}
where the final inequality follows by cross-monotonicity. Since $S^*$ is optimal, this means $S_0 = \emptyset$. Therefore, we assume $Q_{ij} > 0$.

Next note that $OPT \ge \sum_{j\in S} Q_{ij}h_{ij}(S)$ for $S = \{j\}$, which means $Q_{ij}u_{ij}\leq OPT$ for all $j$.  Given constant $\epsilon\in (0,1]$, let
$ S_{\mathrm{small}} = \left\{j \ |  \  Q_{ij}u_{ij}< \epsilon\cdot\frac{OPT}{n} \right\}. $ By the same argument as in the proof of \cref{lem:ignore2}, we can restrict to agents in $X \setminus S_{\mathrm{small}}$ by losing a $(1-\epsilon)$ factor in $OPT$. Let $\hat{X} = X\setminus S_{\mathrm{small}}$, so that these are now the only agents of interest. The above implies $Q_{ij} u_{ij} \in OPT \cdot \left[ \frac{\epsilon}{n}, 1\right]$ for $j \in \hat{X}$. Since $u_{ij} \in \left[ \frac{\epsilon^2}{n^2}, 1 \right]$ by \cref{lem:ignore2}, this implies $Q_{ij} \in OPT \cdot \left[\frac{\epsilon}{n}, \frac{n^2}{\epsilon^2} \right]$. Therefore, the buckets constructed by the algorithm only use agents from $\hat{X}$.

Let $\hat{S} = S^* \cap \hat{X}$. By the Pigeonhole principle, the elements of some bucket must contribute at least $\frac{\log(1+\delta)}{3\log \frac{n}{\epsilon}}$ fraction of the objective, $OPT$.  Suppose this is the $k^{th}$ bucket $B_k$. 
We therefore have:
\begin{align*}
    \frac{\log(1+\delta)}{\log \frac{n}{\epsilon}} \cdot OPT\leq \sum_{j\in\hat{S}\cap B_k} Q_{ij} h_{ij}(\hat{S}) &
    \leq (1+\delta)^{k+1}\sum_{j\in\hat{S}\cap B_k}  h_{ij} (\hat{S}).
\end{align*}
Suppose we choose $B_k$ as the solution instead. We have
\begin{align*}
\sum_{j\in B_k} Q_{ij} h_{ij}(B_k)
&\geq (1+\delta)^k \sum_{j\in B_k} h_{ij}(B_k)= (1+\delta)^k u_i(B_k)\\
&\geq (1+\delta)^k u_i (B_k\cap\hat{S})=(1+\delta)^k \sum_{j\in B_k\cap\hat{S}}h_{ij}(B_k\cap\hat{S}) \\
&\geq (1+\delta)^k \sum_{j\in B_k\cap\hat{S}}h_{ij}(\hat{S}) \geq \frac{(1-\epsilon)\log(1+\delta)}{3(1+\delta)\log \frac{n}{\epsilon}} OPT .
\end{align*}
Here, the second inequality holds because $u_i$ is monotonically non-decreasing, and the next inequality holds since $h_{ij}$ is cross-monotone, so that $h_{ij}(B_k\cap\hat{S})\geq h_{ij}(\hat{S})$. Thus, the largest of the solutions $V_k$ is a $\frac{3(1+\delta)\log \frac{n}{\epsilon}}{(1-\epsilon)\log(1+\delta)}$-approximation to the 
optimal solution.  This is minimized at $\delta=e-1$, giving us an approximation ratio of $\frac{3e\log \frac{n}{\epsilon}}{(1-\epsilon)}\leq 3e(1+2\epsilon)\log n$ for $\epsilon<\frac{1}{2}$ and for large enough $n$. 

We can execute this algorithm in almost linear time in the following way: guess the right value of $OPT$ by a binary search,
which takes $O(\frac{\log n}{\log(1+\epsilon)})$ time to find $OPT$ up to multiplicative error of $(1+\epsilon)$.
Throw out all elements
that have $Q_{ij}u_{ij}<\frac{\epsilon \cdot OPT}{n}$ and $u_{ij}<\frac{\epsilon^2}{n^2}$, and find the bucket with the
largest utility.
This takes time $O(n)$, leading to an
overall time of $O(\frac{n\log n}{\log(1+\epsilon)})$.
\end{proof}

\subsection{Oracle for Symmetric Weighted Utilities}\label{sec:oracle-concave}
We will now complete the proof of the second part of \cref{thm:main}. Recall that in the symmetric weighted setting, each agent $j$ has a fixed non-negative amount of data $s_{ij}$ that they contribute agent $i$ if they trade. For $S \subseteq X \setminus \{i\}$, let $D(S)=\sum_{j\in S} s_{ij}$. Then the utility that $i$ receives if it is assigned $S$ is $f_i(D(S))$, where $f_i$ is a monotonically non-decreasing and non-negative concave function. Further, we divide the utilities according to the proportional 
value rule
$h_{ij} (S) = \frac{f_i(D(S))}{D(S)} s_{ij}.$

Let $Q(S) = \sum_{j \in S} Q_{ij} s_{ij}$. Then the oracle (\cref{eqn:oracle}) becomes:
$$ \max_{S \subseteq X \setminus \{i\}} \frac{f_i(D(S))}{D(S)} \cdot Q(S). $$

We will present a $(1+\epsilon)$ approximation to the oracle, which when combined with \cref{thm:approximate-mwa} completes the proof of the second part of \cref{thm:main}. First note that $f_i(x)/x$ is a non-increasing function of $x$. Therefore, the optimal solution does not contain any $j$ with $Q_{ij} \le 0$; if it did, removing this $j$ increases $Q(S)$ and does not decrease $\frac{f_i(D(S))}{D(S)}$. We therefore assume $Q_{ij} > 0 \ \forall j$.

For $\epsilon > 0$, the algorithm is now as follows, where $S^*$ is the optimal solution:
\begin{enumerate}
\item Guess the value of $D(S^*)$ in powers of $(1+\epsilon)$. Let $\phi$ denote the current guess.
\item Solve the following {\sc Knapsack} problem to a $(1+\epsilon)$ approximation \cite{williamson2011design}:
$$ V(\phi) = \mbox{max}_S \sum_{j \in  S} Q_{ij} s_{ij} \qquad \mbox{s.t.} \qquad \sum_{j \in S} s_{ij} \le \phi.$$
\item Choose $\phi$ and the solution that maximizes $V(\phi) \cdot \frac{f_i(\phi)}{\phi}.$
\end{enumerate}

\begin{theorem}
The above algorithm yields a $(1+\epsilon)$ approximation to the oracle in polynomial time.
\end{theorem}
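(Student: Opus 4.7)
The plan is to show that the geometric enumeration of $\phi$ always includes a value $\phi^*$ that is within a $(1+\epsilon)$ factor of $D(S^*)$, and that on this guess, the FPTAS for knapsack combined with the monotonicity of $x \mapsto f_i(x)/x$ yields a near-optimal oracle value. I would begin by letting $S^*$ be the optimal subset, $D^* := D(S^*)$, and $OPT := \tfrac{f_i(D^*)}{D^*} Q(S^*)$. Since the algorithm enumerates $\phi$ in a geometric grid with ratio $(1+\epsilon)$, there exists a guess $\phi^*$ with $D^* \le \phi^* \le (1+\epsilon) D^*$. For this $\phi^*$, the set $S^*$ is feasible for the knapsack instance with capacity $\phi^*$, item sizes $s_{ij}$, and values $Q_{ij} s_{ij}$, so the knapsack FPTAS returns $V(\phi^*) \ge Q(S^*)/(1+\epsilon)$.

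Next I would combine this bound with the key structural fact that $x \mapsto f_i(x)/x$ is non-increasing on $x > 0$ (which follows from concavity of $f_i$ together with $f_i(0) \ge 0$, and is the very fact already exploited in the paragraph before the algorithm). From $f_i$ monotone non-decreasing and $\phi^* \le (1+\epsilon) D^*$, I get
\[
\frac{f_i(\phi^*)}{\phi^*} \;\ge\; \frac{f_i(D^*)}{(1+\epsilon)\,D^*}.
\]
Multiplying these two inequalities yields
\[
V(\phi^*)\cdot \frac{f_i(\phi^*)}{\phi^*} \;\ge\; \frac{Q(S^*)}{1+\epsilon}\cdot \frac{f_i(D^*)}{(1+\epsilon)\,D^*} \;=\; \frac{OPT}{(1+\epsilon)^2}.
\]

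The third step is to show that the quantity the algorithm scores, namely $V(\phi)\cdot f_i(\phi)/\phi$, is a valid lower bound on the true oracle objective of the returned set. For any $\phi$ in the grid, the knapsack solution $S$ satisfies $D(S) \le \phi$, and because $f_i(x)/x$ is non-increasing, $f_i(D(S))/D(S) \ge f_i(\phi)/\phi$. Hence the true oracle value of $S$ obeys
\[
\frac{f_i(D(S))}{D(S)}\, Q(S) \;\ge\; \frac{f_i(\phi)}{\phi}\, V(\phi).
\]
Taking the max over $\phi$ and using the bound at $\phi^*$, the final returned solution has oracle value at least $OPT/(1+\epsilon)^2$. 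Rescaling $\epsilon \leftarrow \epsilon/3$ converts this into the stated $(1+\epsilon)$-approximation. For the running time, the grid has $O(\epsilon^{-1}\log(\sum_j s_{ij}))$ values of $\phi$, and each invocation of the knapsack FPTAS runs in time polynomial in $n$ and $1/\epsilon$, so the total running time is polynomial.

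The only nontrivial point, and therefore the one I would make sure to flag explicitly, is the use of the inequality $f_i(D(S))/D(S) \ge f_i(\phi)/\phi$ to convert the scoring proxy into an honest bound on the true fractional objective — without this observation there is a mismatch between what the algorithm optimizes and what the oracle actually measures. Everything else (the FPTAS guarantee for knapsack, the argument that negative $Q_{ij}$ can be discarded, and the geometric grid) is routine.
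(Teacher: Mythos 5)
Your proof is correct and follows essentially the same route as the paper's: guess $D(S^*)$ on a geometric grid, solve a knapsack instance at each guessed capacity, and use the fact that $f_i(x)/x$ is non-increasing (together with monotonicity of $f_i$) to relate the guessed capacity back to the optimum. You are in fact more careful than the paper on two points it glosses over --- the extra $(1+\epsilon)$ loss from the knapsack FPTAS (absorbed by rescaling $\epsilon$), and the explicit check that the score $V(\phi)\cdot f_i(\phi)/\phi$ genuinely lower-bounds the true oracle value of the returned set because $D(S)\le\phi$.
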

\begin{proof}
Let $\phi^* = \sum_{j \in S^*} s_{ij}$, and let $V(\phi^*) = \sum_{j \in S^*} Q_{ij} s_{ij}$. Our algorithm tries some $\hat{\phi} \in \phi^* \cdot [1,1+\epsilon]$. Fix this choice of $\hat{\phi}$. Clearly,
$$ \frac{f_i(\phi^*)}{\phi^*} \le (1  + \epsilon) \frac{f_i(\hat{\phi})}{\hat{\phi}} \qquad \mbox{and} \qquad  V(\phi^*) \le V(\hat{\phi}).$$
Combining these yields the proof.
\end{proof}
\section{General Concave Utilities with Proportional Sharing}
\label{sec:continuous}
In this section, we will prove \cref{thm:cont0}. Unlike \cref{sec:approx}, we will assume the utilities are continuous and concave, which is motivated by agents partially sharing their data. Formally, we assume agent $j$ can contribute at most $s_{ij}$ amount of data to $i$. Suppose they contribute fraction $y_{ij} \in [0,1]$ of this data. Denote by $\vec{y_i}$ the vector $\langle y_{i1}, y_{i2}, \ldots \rangle$. Then the utility $i$ receives is given by the monotonically non-decreasing concave function $u_i(\vec{y_i}) \in [0,1]$. 

We will assume $u_i(\vec{0}) = 0$. Further, we will assume $i$ gets strictly positive utility from every agent $j$'s data\footnote{We assume there is a set $X_i$ of agents satisfying this condition and these are the only agents of interest in the remaining optimization. For simplicity, we are assuming $X_i = X$.}, so that for a value $\delta > 0$ with polynomial bit complexity, we assume $u_i(\vec{y_i}) \ge \delta$ for any $\vec{y_i}$ with at least one coordinate set to $1$. 
 
We now assume this utility is shared via proportional value as:
$$ h_{ij}(\vec{y_i}) = \frac{s_{ij}y_{ij}}{\sum_{k \in X \setminus \{i\}} s_{ik} y_{ik}} \cdot u_i(\vec{y_i}).$$

Let $Y_i = \prod_{j \in X \setminus \{i\}} [0,1]$. Then a solution to \dataex assigns a Borel measure $\mu_i$ to $Y_i$, and it satisfies the {\sc balance} conditions:

\begin{align}\label{eqn:balance_cont}
    \forall~i&, \sum_{j \in X \setminus \{i\}} \int h_{ij}(\vec{y_i}) \, d\mu_i = \sum_{j \in X \setminus \{i\}} \int h_{ji}(\vec{y_j}) \, d\mu_j.
\end{align}
The objective of maximizing social welfare becomes:
$$ \mbox{Social Welfare} = \sum_{i \in X}  \sum_{j \in X \setminus \{i\}} \int h_{ij}(\vec{y_i}) \, d \mu_i.$$

\subsection{Multiplicative Weight Method and Oracle}
As before, we can solve this using the multiplicative weight method, where the set $P$ is simply the set of all Borel measures on $Y_i$ for each agent $i$. This solution will violate the balance constraint by an additive $\epsilon$. For a given choice of dual variables $\{Q_{ij}\}$, the optimum Borel measure for the oracle becomes a point mass for each agent $i$. For given $i$, omitting the calculation, the oracle corresponds to solving the problem:
$$ \mbox{Oracle for agent } i  = \max_{\vec{y_i} \in Y_i} \left(\sum_{j \in X \setminus \{i\}} Q_{ij} s_{ij}y_{ij} \right) \cdot \left(\frac{u_i(\vec{y_i})}{\sum_{j \in X \setminus \{i\}} s_{ij} y_{ij}} \right). $$
For this agent $i$ and setting of dual variables, let $OPT$ denote the optimal value of the oracle. We now show how to approximate $OPT$ to a factor of $1+\epsilon$ in polynomial time.

\paragraph{Approximating the Oracle.} First note that we cannot simply delete agents $j$ with $Q_{ij} < 0$; the optimum solution can include such agents. However, if all agents have $Q_{ij} < 0$, then $OPT = 0$, since $\vec{y_i} = \vec{0}$ is a feasible solution. We can easily check this; therefore, we assume $OPT > 0$. In this case, there exists agent $j$ such that $y_j = 1$ in the optimal solution. Otherwise, we can scale the variables up till this is satisfied; note that the ratio of the two linear terms remains unaffected by scaling, while $u_i(\vec{y_i})$ only increases by scaling up. 

Let $V(\vec{y_i}) = \frac{u_i(\vec{y_i})}{\sum_{j \in X \setminus \{i\}} s_{ij} y_{ij}} $, and let $V_{\min}$ and $V_{\max}$ denote the maximum and minimum values attained by this function over $\vec{y_i} \in Y_i$ with at least one coordinate set to $1$. Note that both these quantities have polynomial bit complexity since $u_i(\vec{y_i}) \ge \delta$ for such solutions $\vec{y_i}$. Let $V^*$ denote its value in the optimal solution, $OPT$; clearly $V^* \in [V_{\min}, V_{\max}]$. 

Our algorithm has the following steps:

\begin{enumerate}
    \item We use binary search to guess $V^*$ to a factor of $1+\epsilon$ in the range $[V_{\min}, V_{\max}]$.
    \item For each guess $V$, we solve the following convex optimization problem:
    $$ \max_{\vec{y_i} \in Y_i} \sum_{j \in X \setminus \{i\}} Q_{ij} s_{ij}y_{ij}, \qquad 
    \mbox{s.t.} \qquad
     u_i(\vec{y_i}) \ge V \cdot \sum_{j \in X \setminus \{i\}} s_{ij} y_{ij}.$$
    \item Among these solutions $\vec{y_i}$, one for each guess of $V$, we choose the one that has largest value for the oracle objective.
\end{enumerate} 

\begin{lemma}
For any $\epsilon > 0$ above algorithm is a $1+\epsilon$ approximation to $OPT$ in time polynomial in the input bit complexity and $\frac{1}{\epsilon}$.
\end{lemma}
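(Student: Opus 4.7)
The plan is to analyze the algorithm via two quantities associated with any feasible $\vec{y_i} \in Y_i$: the linear term $L(\vec{y_i}) = \sum_{j} Q_{ij} s_{ij} y_{ij}$ and the ratio $V(\vec{y_i}) = u_i(\vec{y_i}) / \sum_{j} s_{ij} y_{ij}$, so that the oracle objective is exactly $L(\vec{y_i}) \cdot V(\vec{y_i})$. Denote the optimizer by $\vec{y_i}^*$ and write $L^* = L(\vec{y_i}^*)$ and $V^* = V(\vec{y_i}^*)$, so that $OPT = L^* V^*$. Since we have already handled the trivial case $OPT = 0$ separately, we may assume $L^* > 0$ and (by the scaling argument in the excerpt) that $\vec{y_i}^*$ has at least one coordinate equal to one. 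Hence $V^*$ lies in the interval $[V_{\min}, V_{\max}]$ which has polynomial bit complexity by the assumption $u_i(\vec{y_i}) \ge \delta$ on such solutions.

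The key approximation step is the following observation about the convex program solved for a guess $V$. If the guess satisfies $V \le V^*$, then $\vec{y_i}^*$ itself is feasible for the program because $u_i(\vec{y_i}^*) = V^* \cdot \sum_j s_{ij} y^*_{ij} \ge V \cdot \sum_j s_{ij} y^*_{ij}$. Consequently the convex program returns some $\hat{\vec{y_i}}$ with $L(\hat{\vec{y_i}}) \ge L^*$, and the constraint guarantees $V(\hat{\vec{y_i}}) \ge V$. Plugging back into the oracle, the oracle value on $\hat{\vec{y_i}}$ is at least $L^* \cdot V$. The binary search produces some guess $V \in [V^*/(1+\epsilon), V^*]$, so this lower bound becomes $L^* V^*/(1+\epsilon) = OPT/(1+\epsilon)$, as required.

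For the runtime claim the main thing to verify is that each convex program is polynomial-time solvable. The objective is linear in $\vec{y_i}$, and the constraint can be rewritten as $u_i(\vec{y_i}) - V \sum_j s_{ij} y_{ij} \ge 0$; since $u_i$ is concave and the linear term is affine, the left-hand side is concave, so the feasible region is a convex subset of the box $Y_i$. The program is therefore a concave maximization problem over a convex set and can be solved to the required accuracy in polynomial time via standard interior point or ellipsoid methods applied to the oracle access to $u_i$. The number of binary-search iterations is $O(\log_{1+\epsilon}(V_{\max}/V_{\min})) = O(\frac{1}{\epsilon}\log(V_{\max}/V_{\min}))$, which is polynomial in $1/\epsilon$ and in the input bit complexity (via $\delta$ and the $s_{ij}$), completing the proof.

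The main obstacle I expect is bookkeeping: being careful that the linear term $L(\vec{y_i})$ can take either sign (because some $Q_{ij}$ may be negative), that the scaling argument to enforce a coordinate equal to one does not worsen the ratio, and that the small-error guess $V$ of $V^*$ translates cleanly into a multiplicative $(1+\epsilon)$ guarantee on the product $L \cdot V$. Once these are handled, the two-step structure (feasibility of $\vec{y_i}^*$ in the relaxed program, followed by the ratio argument) yields the bound directly.
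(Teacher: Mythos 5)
Your proposal is correct and follows essentially the same route as the paper's proof: for a guess $V \in [V^*/(1+\epsilon), V^*]$, the optimal point is feasible for the convex program, so the returned solution has linear term at least $L^*$ and ratio at least $V$, giving oracle value at least $OPT/(1+\epsilon)$. The extra care you take with the sign of the $Q_{ij}$, the scaling step, and the binary-search/convex-solver runtime is consistent with (and slightly more explicit than) what the paper records.
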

\begin{proof}
The algorithm tries some $V \in [V^*/(1+\epsilon), V^*]$. For this setting, suppose the convex program achieves objective $W$. Since $OPT$'s variables $\vec{y_i^*}$ are feasible for this convex program, we have 
$ W \ge \sum_{j \in X \setminus \{i\}} Q_{ij} s_{ij}y^*_{ij}.$
This means the oracle objective achieved by our algorithm satisfies:
$$ \mbox{Oracle Objective } \ge W \cdot V \ge \sum_{j \in X \setminus \{i\}} Q_{ij} s_{ij}y^*_{ij} \cdot \frac{V^*}{1+\epsilon},$$
completing the proof.
\end{proof}

This finally shows the following theorem, which restates \cref{thm:cont0}:
\begin{theorem} \label{thm:continuous}
For general monotone concave utilities with proportional value sharing, for any $\epsilon > 0$, there is an algorithm for \dataex running in time polynomial in input size and $\frac{1}{\epsilon}$, that approximates the social welfare to a factor of $(1+\epsilon)$ while violating \cref{eqn:balance_cont} by an additive $\epsilon$.
\end{theorem}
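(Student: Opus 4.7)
\medskip

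\noindent\textbf{Proof proposal.} The plan is to follow the same template as in \cref{sec:approx}: cast the welfare maximization problem as an infinite-dimensional LP over Borel measures $\{\mu_i\}$ on $Y_i$, apply the PST multiplicative-weights framework with the balance constraints relaxed to additive slack $\epsilon$, and plug in the $(1+\epsilon)$-approximate oracle that was just established in the preceding lemma. First, I would write the LP analogously to \ref{lp:feasibility_general}: the objective and the two one-sided balance constraints are all linear functionals of $\{\mu_i\}$ because each $h_{ij}(\vec y_i)$ is integrated against $\mu_i$, and the feasible region $P$ for each agent is the set of Borel probability measures on $Y_i$ (allowing total mass at most $1$ with remainder at $\vec{0}$). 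Since we assumed $u_i \ge \delta$ once a coordinate is set to $1$, an analog of \cref{lemma:ignore-small-values} shows the relaxed optimum is at least $\delta$, so we can binary-search the objective value $B$ in powers of $(1+\epsilon)$ over a range of polynomial bit complexity.

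The crucial structural observation is that when we apply PST with a fixed non-negative dual vector $p$, the resulting oracle objective $p^{\top} A \mu$ decomposes across agents, and for each agent $i$ it is a linear functional of $\mu_i$. Linear functionals on the set of Borel probability measures on the compact set $Y_i$ attain their maxima at point masses, so the optimum measure is supported at a single $\vec y_i \in Y_i$. Consequently, the per-agent oracle reduces exactly to
\[
\max_{\vec y_i \in Y_i} \Bigl(\sum_{j \ne i} Q_{ij} s_{ij} y_{ij}\Bigr) \cdot \frac{u_i(\vec y_i)}{\sum_{j \ne i} s_{ij} y_{ij}},
\]
for appropriate $Q_{ij}$ derived from the current PST weights. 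This is precisely the oracle whose $(1+\epsilon)$-approximation has already been established in this section by guessing $V^*$ and solving a concave program.

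Next, I would invoke a continuous analog of \cref{thm:approximate-mwa}, with approximation factor $\alpha = 1+\epsilon$ and width $\rho \le n$ (since each $u_i \le 1$). The proof carries over verbatim: the PST regret bound from~\cite{AroraHK_MWA} only uses boundedness of $\mathbf{m}^{(t)}$ and the oracle's approximation guarantee, neither of which depends on finite dimensionality. The averaged solution $\bar\mu = \frac{1}{T}\sum_t \mu^{(t)}$ then satisfies the balance constraints up to additive $\epsilon$ and the welfare constraint up to factor $2(1+\epsilon)(1+3\delta)$; as in \cref{thm:main}, by rescaling parameters (replacing $\epsilon$ by $\epsilon/c$ for a suitable constant $c$) we obtain a $(1+\epsilon)$-approximation. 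The total running time is $\mathrm{poly}(n,1/\epsilon)$ multiplied by the oracle time, which is polynomial in input bit complexity.

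The main obstacle I anticipate is ensuring that the concave optimization step in the oracle runs in polynomial time in the input bit complexity. This requires that $u_i$ be given by an efficient evaluation (and subgradient) oracle, so that the convex feasibility problem $u_i(\vec y_i) \ge V \cdot \sum_j s_{ij} y_{ij}$ can be solved by the ellipsoid method or an interior-point method to the required precision; the lower bound $\delta$ on non-trivial utilities is what keeps the bit complexity polynomial. A secondary, mostly notational obstacle is justifying measurability throughout (e.g., that the support of the averaged PST iterate is a finite union of point masses, so that $\bar\mu$ is a legitimate Borel measure); this is immediate since each $\mu^{(t)}$ is a point mass per agent.
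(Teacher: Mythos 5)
Your proposal matches the paper's own argument essentially step for step: the paper also casts the problem as an LP over Borel measures, observes that the PST oracle is a per-agent linear functional maximized at a point mass, reduces to the ratio objective $\bigl(\sum_j Q_{ij}s_{ij}y_{ij}\bigr)\cdot u_i(\vec{y_i})/\sum_j s_{ij}y_{ij}$, and approximates it by binary-searching $V^*$ over $[V_{\min},V_{\max}]$ (using the lower bound $\delta$ for polynomial bit complexity) and solving a convex program for each guess, then plugs this into the multiplicative-weights framework of \cref{sec:approx}. Your added remarks on subgradient access for the convex step and on measurability of the averaged iterate are consistent elaborations rather than a different approach, so no further comparison is needed.
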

\section{Imbalanced Data Exchange}
\label{sec:imbalance}
The {\sc Balance} constraint \cref{eqn:balance} can be viewed as a form of fairness, and as such, it trades off with social welfare, in the sense that the stricter the  constraint (hence the greater the fairness), the lower the social welfare achieved. We now extend the algorithm in \cref{sec:approx} to handle {\em imbalanced} exchange, where \cref{eqn:balance} can be violated. 

To encode imbalance, we change \cref{eqn:lp_feas2,eqn:lp_feas3} as follows:

\begin{align}
\forall~i, \ \ \sum_{j,S}h_{ij}(S) x_{iS} -\sum_{j,S|i\in S}h_{ji}(S)x_{jS} + \delta_i & \geq -\epsilon \label{eqn:delta}\\
\forall~i, \ \  -\sum_{j,S}h_{ij}(S)x_{iS} + \sum_{j,S|i\in S}h_{ji}(S)x_{jS} + \gamma_i &\geq -\epsilon. \label{eqn:gamma} 
\end{align}

The first constraint captures that the utility provided minus utility received for agent $i$ is at most $\delta_i$, while the second constraint captures that the utility received minus utility sent is at most $\gamma_i$. We add the following constraints to the polyhedron $P$:

\begin{align}
\sum_i g(\delta_i) & \le C \label{eq:cost1} \\
\sum_i h(\gamma_i) & \le C' \label{eq:cost2} \\
\forall i, \ \ \delta_i, \gamma_i & \geq 0. \label{eq:nonneg2}
\end{align}

We assume $g,h$ are non-decreasing convex functions. The first constraint captures that the agents are compensated as a convex function of $\delta_i$, and the total payment is bounded by parameter $C$. The second constraint captures the platform collecting payment from agents whose utility received is more than utility sent, subject to a cap $C'$ on payments collected. 

The solution method in \cref{sec:approx} extends giving the same approximation to the welfare as before, while preserving the cost budgets $C,C'$. To see this, simply observe that the Oracle is modified as follows:

$$ \mbox{Oracle } = \max_{x,\delta,\gamma \in P}\sum_{i,j,S} Q_{ij}h_{ij}(S)x_{iS} + \sum_i p_i \delta_i + \sum_i r_i \gamma_i $$
where $\vec{p}, \vec{r} \ge 0$. This separates into two optimization problems: The first is the same oracle as in \cref{sec:approx}, while the second maximizes $ \sum_i p_i \delta_i + \sum_i r_i \gamma_i$ subject to \cref{eq:cost1,eq:cost2,eq:nonneg2}. This is a convex optimization problem with non-negative objective, and can be solved to an arbitrarily good approximation in polynomial time. Therefore, the entire Oracle admits to an $\alpha$ approximation algorithm to welfare, where $\alpha$ is exactly as computed in \cref{sec:approx}. This leads to the following theorem:

\begin{theorem}
\label{thm:imbalance}
In the setting where \cref{eqn:lp_feas2,eqn:lp_feas3} are replaced by \cref{eqn:delta,eqn:gamma} subject to \cref{eq:cost1,eq:cost2,eq:nonneg2}, there is a $O(\log n)$ approximation to welfare for cross-monotonic utility sharing, and a PTAS for the symmetric weighted setting.
\end{theorem}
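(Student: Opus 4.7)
The plan is to reuse the multiplicative weights framework of \cref{sec:mwm} almost verbatim, after extending the variable set to include the slack variables $\delta_i, \gamma_i$ and enlarging the polytope $P$ to capture the cost budgets. Concretely, I would form the extended LP whose variables are $\{x_{iS}\}, \{\delta_i\}, \{\gamma_i\}$, whose ``hard'' constraints are the objective threshold \cref{eqn:lp_feas1} together with the relaxed balance inequalities \cref{eqn:delta,eqn:gamma}, and whose ``easy'' polytope $P'$ is defined by \cref{eqn:lp_feas4,eqn:lp_feas5,eq:cost1,eq:cost2,eq:nonneg2}. Because $g$ and $h$ are convex and non-decreasing, the sublevel sets $\{\sum_i g(\delta_i) \le C\}$ and $\{\sum_i h(\gamma_i) \le C'\}$ are convex, so $P'$ is a convex body, which is all that PST needs. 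Binary search over a guess $B$ for the social welfare, exactly as in \cref{alg:mwu_general}, then reduces the problem to repeatedly solving an oracle over $P'$.

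The next step is to argue that the PST oracle decomposes. For non-negative dual weights (the weights attached to \cref{eqn:lp_feas1}, \cref{eqn:delta}, and \cref{eqn:gamma}), the oracle takes the form
\[
\max_{(x,\delta,\gamma) \in P'} \ \sum_{i,j,S} Q_{ij} h_{ij}(S) x_{iS} \ +\ \sum_i p_i \delta_i \ +\ \sum_i r_i \gamma_i,
\]
where $p_i, r_i \ge 0$ because $\delta_i$ appears with coefficient $+1$ in \cref{eqn:delta} and $\gamma_i$ with coefficient $+1$ in \cref{eqn:gamma}, while the coefficients $Q_{ij}$ may have either sign (exactly as in \cref{sec:oracle-general-case,sec:oracle-concave}). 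Because $P'$ is a Cartesian product of the per-agent simplices $\{\sum_S x_{iS} \le 1\}$ with the joint convex set on $(\delta,\gamma)$, the oracle splits into (i) the original per-agent oracle \cref{eqn:oracle}, and (ii) the convex program $\max_{\delta,\gamma \ge 0} \sum_i (p_i \delta_i + r_i \gamma_i)$ subject to \cref{eq:cost1,eq:cost2}.

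Subproblem (i) is handled by the existing algorithms: the bucketing algorithm of \cref{sec:oracle-general-case} gives an $O(\log n)$ approximation for cross-monotonic sharing, and the knapsack-based algorithm of \cref{sec:oracle-concave} gives a $(1+\epsilon)$ approximation in the symmetric weighted setting. Subproblem (ii) is the maximization of a non-negative linear function over a convex region defined by separable convex constraints, and can be solved to any desired multiplicative accuracy in polynomial time by standard convex programming (e.g.\ the ellipsoid method, since both the feasibility oracle and subgradients of $g, h$ are computable). Combining the two pieces gives an $\alpha$-approximate oracle with the same $\alpha$ as in the balanced case. Plugging this into \cref{thm:approximate-mwa} yields a solution $(\bar x, \bar\delta, \bar\gamma)$ that lies in $P'$ (so the cost budgets $C, C'$ are preserved) and that satisfies \cref{eqn:delta,eqn:gamma} up to an additive $\epsilon$, while achieving social welfare $\Omega(OPT/\alpha)$.

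The main obstacles are essentially bookkeeping rather than conceptual. First, one must check that the coefficients attached to $\delta_i, \gamma_i$ in the PST update have the right signs so that the oracle remains a maximization over a convex region (this is why the slacks are introduced with a $+$ sign in both \cref{eqn:delta,eqn:gamma}). Second, one must verify that the ``width'' parameter $\rho$ of the PST framework is still polynomial after extending the LP --- but the new constraints only carry slacks bounded by $g^{-1}(C)$ and $h^{-1}(C')$, which are polynomially bounded in the input, so the $O(n^2 \alpha^2 \log n / \epsilon^2)$ iteration bound of \cref{alg:mwu_general} is unaffected up to constants. With these two checks in place, the conclusions of \cref{thm:main} transfer to the imbalanced setting, giving the claimed $O(\log n)$ approximation for cross-monotonic sharing and a PTAS for the symmetric weighted setting.
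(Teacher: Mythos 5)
Your proposal is correct and follows essentially the same route as the paper: extend the polytope $P$ with the slack variables and the convex cost-budget constraints, observe that the PST oracle separates into the original per-agent oracle plus a convex program in $(\delta,\gamma)$ with non-negative linear objective, and plug the resulting $\alpha$-approximate oracle back into \cref{thm:approximate-mwa}. The additional checks you flag (signs of the dual weights on the slacks, and the width parameter $\rho$) are sensible bookkeeping that the paper leaves implicit.
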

  
\section{Core Stability and Strategyproofness}
\label{sec:core}
We will now consider the concept of core stability as defined in \cref{def:core}. We assume utilities are monotone set functions, as in \cref{sec:approx}. We first show that such solutions always exist to arbitrarily good approximations, and the special case of $2$-stable solutions can be efficiently computed via a {\sc Greedy} matching algorithm. We then show that core solutions can be far from welfare optimal; nevertheless, approximate core stability trades off with approximate welfare. 

We finally consider strategic misreports by agents in \cref{subsec:truth}, where we present a formal model and show how strategy-proofness trades off with approximate welfare maximization.

\subsection{Existence of Core}
We now show the existence of an $\epsilon$-approximate core solution (\cref{def:core}) for any $\epsilon > 0$. This solution is defined as follows, where we note in the definition below that $\F'$ is a feasible solution constructed just on agents in $U$.

\begin{define}
\label{def:core-approx}
    For $\epsilon > 0$, a feasible solution $\F$ to \dataex is $\epsilon$-approximately {\em core stable} if there is no $U \subseteq X$ of users and another feasible solution $\F'$ just on the users in $U$ such that for all $i \in U$, $u_i(\F') > u_i(\F) + \epsilon$.  
\end{define}

Our proof uses Scarf's lemma from  cooperative game theory~\cite{scarf1967core}. More recently, this lemma has seen applications in showing existence of stable matchings~\cite{thanh2018stable}. However, Scarf's lemma is an existence result, and it is not clear how such a solution can be efficiently computed. 

\begin{theorem}
\label{thm:core}
    For the \dataex problem with utilities being non-negative monotone set functions, an $\epsilon$-approximately core-stable solution always exists for any fixed $\epsilon > 0$.
\end{theorem}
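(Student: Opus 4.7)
The plan is to embed the \dataex problem into Scarf's framework for fractional stable matchings on hypergraphs, in the spirit of Aharoni--Fleiner and \cite{thanh2018stable}, and then lift the resulting Scarf-stable matching back to a genuine randomized exchange on $X$.

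First I would discretize utilities on the grid $L = \{0, \epsilon, 2\epsilon, \ldots, 1\}$. An \emph{edge} is a pair $e = (U, \tau)$ with $U \subseteq X$ non-empty and $\tau \in L^U$ such that there exists a feasible exchange $\F^e$ on $U$ (satisfying \cref{eqn:probability-sum-1} and \cref{eqn:balance} restricted to agents in $U$) with $u_i(\F^e) \ge \tau_i$ for every $i \in U$. Existence is an LP feasibility question for each candidate $(U,\tau)$, and I fix one such witness $\F^e$ per edge. Agent $i$ ranks the edges containing $i$ by the value of $\tau_i$, breaking ties arbitrarily. Because $U$ ranges over a finite family and $L$ is finite, the overall edge set is finite.

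Next I would invoke Scarf's lemma in its hypergraph matching form to obtain weights $x_e \in [0,1]$ satisfying $\sum_{e \ni i} x_e \le 1$ for every $i \in X$, together with the ordinal stability property that for every edge $e$ some $i \in e$ satisfies $\sum_{e' \ni i,\ e' \succeq_i e} x_{e'} = 1$. I would then set $\F := \sum_e x_e \F^e$ on $X$, extending each $\F^e$ by zeros outside $U_e$. The probability constraint \cref{eqn:probability-sum-1} follows from $\sum_S \F_{iS} \le \sum_{e \ni i} x_e \le 1$; the balance constraint \cref{eqn:balance} is a linear equation in the $\{x_{iS}\}$ variables and is satisfied by each $\F^e$ on $U_e$ (with both sides trivially zero for agents outside $U_e$), hence survives the convex combination.

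Finally I would verify $\epsilon$-approximate core stability by contradiction. Assume a coalition $U$ and a feasible exchange $\F'$ on $U$ with $u_i(\F') > u_i(\F) + \epsilon$ for every $i \in U$. Rounding downward, $\tau_i := \lfloor u_i(\F')/\epsilon \rfloor \cdot \epsilon$ lies in $L$, satisfies $\tau_i \le u_i(\F')$ so that $e^* := (U, \tau)$ is a valid edge (witnessed by $\F'$), and strictly exceeds $u_i(\F)$. Scarf-stability applied to $e^*$ yields some $i \in U$ with $\sum_{e' \ni i,\ \tau^{e'}_i \ge \tau_i} x_{e'} = 1$, so $u_i(\F) = \sum_{e' \ni i} x_{e'} u_i(\F^{e'}) \ge \sum_{e' \ni i,\ \tau^{e'}_i \ge \tau_i} x_{e'} \cdot \tau_i = \tau_i > u_i(\F)$, a contradiction.

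The main obstacle I foresee is aligning the hypergraph encoding with the ordinal guarantees of Scarf's lemma: the edge set must be rich enough that every candidate blocking deviation $\epsilon$-rounds to a valid edge, yet finite so that Scarf applies. The delicate technical checkpoint is that the lifted $\F$ simultaneously preserves the probability cap and \cref{eqn:balance}; this rests on balance being a linear equality and each $\F^e$ being supported on coordinates with $i \in U_e$, which makes the zero-extension benign.
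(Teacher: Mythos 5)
Your proposal is correct and follows essentially the same route as the paper's proof: both discretize utilities to an $\epsilon$-grid to obtain a finite Scarf matrix/hypergraph, fix a witness feasible exchange per column (edge), apply Scarf's lemma, lift the fractional matching back to a feasible solution using linearity of the probability and balance constraints, and derive $\epsilon$-approximate core stability from the dominance condition. Your final contradiction step is in fact spelled out a bit more explicitly than the paper's, but the underlying argument is identical.
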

\begin{proof}
Define a matrix $Q$ as follows, where the rows are agents and there is a column for every coalition $S$ and every possible feasible solution $f$ to \dataex just on the agents in $S$. We denote this feasible solution by $f$, and let $\mbox{set}(f) = S$.

To make the number of columns finite, we round utilities down to the nearest $\epsilon$ to create the set $\hat{U} = \{0,\epsilon/2, \epsilon, \ldots, 1\}$. The columns of $Q$ are the elements of the set $\hat{U}^n$. Given a solution $f$, let $u^f = (u_1(f), u_2(f), \ldots, u_n(f))$ denote the utilities of the $n$ agents. We round each $u_i(f)$ down to the nearest multiple of $\epsilon/2$ yielding utility $\hat{u}_i(f)$. The resulting vector of utilities is an element of $\hat{U}^n$, which we denote as $\mbox{col}(f)$.  Let $g = \mbox{col}(f)$.

Let $Q_{ig}=1$ iff $\hat{u}_i(g) > 0$. Then there is a natural ordering $\succeq_i$ of the columns $g$ with $Q_{ig} = 1$, such that $g^1 \succeq_i g^2 \iff \hat{u}_i(g^1)\geq \hat{u}_i(g^2)$.

Consider choosing $g$ to fraction $y_g$, so that for every agent $i$, we have $\sum_g Q_{ig} y_g \le 1$. Consider any feasible solution $f^*$ to \dataex. Since $g^* = \mbox{col}(f^*)$ is a column of $f^*$, choosing $y_{g^*} = 1$, this shows the set of feasible solutions is captured by the constraints $T = \{\sum_g Q_{ig} y_g \le 1 \forall i\}$. Similarly, any solution to the constraints $T$ maps to at least one feasible solution to \dataex. To see this, set 
$$\hat{x}_{iS} = \sum_{f : i \in \mbox{set}(f)} y_f \cdot x^f_{iS},$$
where $\{x^f_{iS}\}$ are the variables corresponding to any solution $f$ such that $g = \mbox{col}(f)$.  The variables $\{\hat{x}_{iS}\}$ preserve \cref{eqn:probability-sum-1,eqn:balance}, showing feasibility.

Scarf's lemma~\cite{scarf1967core} (see Lemma~3.1 in \cite{thanh2018stable}) then says that there is a $y$ satisfying constraints $T$, such that for every column $g$ of $Q$, there is a row $i$ with $Q_{ig} > 0$ (i.e. an agent) such that both these conditions hold:
\begin{enumerate}
    \item $\sum_{g'}Q_{ig'} y_{g'} = 1$; and
    \item For every $g'$ with $Q_{ig'}=1$ and $y_{g'}>0$, we have $f'\succeq_i f$, that is $\hat{u}_i(g') \ge \hat{u}_i(g)$.
\end{enumerate}
Taking the linear combination of the second condition using weights given by $y$, we have
$$ \sum_{g'| Q_{ig'} = 1} y_{g'} \cdot \hat{u}_i(g') \ge \hat{u}_i(g).$$
Consider some feasible solution to {\sc data markets} given by $y$. Then, the above says that for every possible $f$ that a set of agents $\mbox{set}(f)$ could deviate to, there is one agent $i$ whose utility (as given by $\hat{u}$) in $y$ (the LHS of the previous equation) is at least $\hat{u}_i(f)$. Since we discretized utilities to within $\epsilon/2$, this means this coalition will not deviate to $f$ if they are indifferent to utilities increasing by $\epsilon$. This shows $y$ yields an $\epsilon$-approximate core solution, completing the proof.
\end{proof}

\subsection{Greedy Matching is $2$-Stable}
\label{subsec:2stable}
The above proof is based on a fixed point argument, and does not lend itself to efficient computation. On the positive side, we show a simple polynomial time algorithm for the special case of $2$-stability. Recall from \cref{def:core} that a solution is $c$-stable if there is no coalition of at most $c$ agents that can deviate to improve all their utilities. The algorithm below generalizes a similar greedy algorithm for kidney exchanges~\cite{Roth-Kidney}.

\paragraph{{\sc Greedy} Algorithm.} Consider a pair of agents $(i,j)$. Let $u_{ij} = u_i(\{j\})$ and $u_{ji} = u_j(\{i\})$. Consider the \dataex solution that sets $x_{ij} = \min \left(1, \frac{u_{ji}}{u_{ij}} \right)$ and $x_{ji} = \min \left(1, \frac{u_{ij}}{u_{ji}} \right)$. This solution satisfies {\sc balance} for this pair of agents, and is maximal, in the sense that both agents cannot simultaneously improve their utilities. Both agents achieve utility $\hat{u}_{ij} = \min(u_{ij}, u_{ji})$ in this solution. 

Now construct a graph on the agents where we place an edge between every pair of agents $i$ and $j$ with weight $w_{ij} = \hat{u}_{ij}$. Find any greedy maximal weight matching in this graph (call this algorithm {\sc Greedy}), and for each pair of agents in this matching, construct the \dataex solution for this pair. This yields the final solution {\sc Greedy}.

\begin{theorem}
\label{thm:greedy}
For arbitrary monotone utility functions of the agents, the {\sc Greedy} algorithm is $2$-stable.
\end{theorem}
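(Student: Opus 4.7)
The plan is to handle deviating coalitions of sizes one and two separately. A lone agent $i$ has no exchange partner, so the only feasible solution on $\{i\}$ produces utility $u_i(\emptyset) = 0$; since {\sc Greedy} assigns non-negative utility to every agent, no singleton deviation is strictly improving.

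For a pair $U = \{i, j\}$ I would first characterize every feasible solution on $U$. The only admissible non-empty sets available to $i$ and $j$ are $\{j\}$ and $\{i\}$ respectively, so the distribution collapses to two numbers $a = x_{i\{j\}}$ and $b = x_{j\{i\}}$ in $[0,1]$. Since any valid sharing rule forces $h_{ij}(\{j\}) = u_i(\{j\}) = u_{ij}$ and symmetrically $h_{ji}(\{i\}) = u_{ji}$, the {\sc balance} condition \cref{eqn:balance} simplifies to $a \cdot u_{ij} = b \cdot u_{ji}$. Hence both agents receive the same expected utility in any 2-agent feasible solution, and with $a, b \le 1$ this common value is bounded above by $\min(u_{ij}, u_{ji}) = \hat{u}_{ij}$. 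So $\hat{u}_{ij}$ is the tightest upper bound on what either agent can attain in any deviation on $\{i,j\}$.

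The remaining step is to show that at least one of $\{i, j\}$ already has utility at least $\hat{u}_{ij}$ under {\sc Greedy}, which prevents strict improvement for both simultaneously. I would invoke the standard property of greedy maximum-weight matching: when the edge $(i,j)$ of weight $\hat{u}_{ij}$ is processed in non-increasing weight order, either both endpoints are still free and the algorithm commits $(i,j)$ (so both $i$ and $j$ receive exactly $\hat{u}_{ij}$ from the pairwise exchange built by {\sc Greedy}), or some previously-processed, heavier edge $(i, i')$ had already matched one of the endpoints, say $i$, so that $i$ collects utility $\hat{u}_{ii'} \ge \hat{u}_{ij}$ in {\sc Greedy}. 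In every case at least one member of $\{i,j\}$ fails to strictly improve, so no 2-coalition deviation succeeds. The main subtlety I anticipate is verifying that the {\sc balance} condition really collapses the two payoffs to the single bound $\hat{u}_{ij}$ on the restricted instance; once that reduction is in place, the matching argument is the familiar one underlying the $2$-approximation guarantee of greedy maximal matching.
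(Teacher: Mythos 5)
Your proof is correct and follows essentially the same route as the paper's: the paper likewise argues that for any deviating pair $(i,j)$, the greedy maximal-weight matching guarantees one of the two endpoints already receives utility at least $\hat{u}_{ij} = \min(u_{ij}, u_{ji})$, which blocks a strictly improving deviation. You merely make explicit two steps the paper leaves implicit --- that singleton deviations are trivially non-improving and that {\sc balance} forces both members of a pair coalition to receive a common utility capped at $\hat{u}_{ij}$ --- both of which are verified correctly.
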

\begin{proof}
Suppose not, then there is a pair $(i,j)$ that can deviate and where $\hat{u}_{ij}$ is larger than the utilities $i$ and $j$ were receiving in {\sc Greedy}. But by the construction of maximal weight matching, one of $i$ or $j$ must have larger utility in the matching, which is a contradiction. Therefore, {\sc Greedy} is $2$-stable.
\end{proof}

\subsection{Gap between Social Welfare and the Core}
We next study the tradeoff between the core and social welfare. We first show that though core-stable solutions always exist, they may be quite far from welfare optimal solutions.

\begin{theorem}
\label{thm:core-neg}
    For symmetric weighted utilities and proportional sharing of utility, the gap in social welfare between any core-stable solution and the welfare-optimal solution can be $\Omega(\sqrt{n})$.
\end{theorem}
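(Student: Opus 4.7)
The plan is to exhibit an explicit family of symmetric-weighted instances with proportional sharing in which the welfare-optimum has value $\Omega(n)$ while every core-stable solution has welfare $O(\sqrt{n})$, yielding the claimed $\Omega(\sqrt{n})$ gap. The key conceptual point is the tension between interim balance (which allows randomized, thinly-spread allocations that aggregate many small contributions into a large total) and the pointwise dominance required for core stability (which forces per-agent utility to be concentrated).

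Concretely, I would use an instance with roughly $k = \lceil\sqrt{n}\rceil$ ``hub'' agents and $n - k$ ``spoke'' agents. The data weights are $s_{ij}=1$ on all hub--spoke pairs, with $f_h$ at each hub tuned to cap at $1$ on the aggregate of all spoke data, plus within-spoke pairwise weights arranged so that any two spokes in the same small in-group can give each other utility $\Omega(1)$ via a direct bilateral exchange. For the welfare-optimum lower bound, I would write down a feasible allocation in which every hub absorbs utility $1$ from the aggregate of its spokes and (via the proportional-sharing formula) spreads its outflow thinly back across all spokes, while simultaneously every spoke participates in its in-group match. Checking \cref{eqn:balance} and \cref{eqn:probability-sum-1} for this candidate reduces to direct arithmetic and yields total welfare $\Omega(n)$.

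For the core upper bound, I would argue that in this instance every mixed coalition consisting of one hub together with one or two spokes (drawn from different in-groups) admits a bilateral deviation in which the hub and the spokes all strictly improve over any allocation that spreads hub utility thinly across all $n-k$ spokes. Core stability therefore forces every spoke served at a non-trivial level to consume an $\Omega(1/\sqrt{n})$ fraction of some hub's proportional-sharing outflow cap of $1$, so at most $O(\sqrt{n})$ spokes can be simultaneously served at $\Omega(1)$. A proportional-sharing flow counting argument in the spirit of the oracle analysis in \cref{sec:oracle-concave} then bounds the total utility delivered across the bipartition by $O(\sqrt{n})$, and summing with the hubs' own contributions ($\le \sqrt{n}$) gives core welfare $O(\sqrt{n})$.

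The main obstacle will be the third step: ruling out clever randomized mixtures of in-group matchings and hub--spoke exchanges that simultaneously satisfy every blocking constraint while delivering welfare well above $\Omega(\sqrt{n})$. I expect this requires an LP-duality argument: write social welfare as a linear objective over the polytope of solutions that are simultaneously feasible and core-stable, dualize the balance and blocking constraints, and construct an explicit $O(\sqrt{n})$ dual certificate. Tuning the in-group sizes and the ratios between $f_h$, the spoke--spoke weights, and the individual hub--spoke utilities so that this certificate is tight against every coalition is where the $\sqrt{n}$ exponent emerges.
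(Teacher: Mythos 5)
There is a genuine gap here, and in fact the proposed construction appears to be internally inconsistent. You want two things simultaneously: (i) the welfare optimum is boosted to $\Omega(n)$ partly because ``any two spokes in the same small in-group can give each other utility $\Omega(1)$ via a direct bilateral exchange,'' and (ii) every core-stable solution has welfare $O(\sqrt{n})$ because only $O(\sqrt{n})$ spokes can be served at level $\Omega(1)$. These cannot both hold: a balanced bilateral in-group match is itself a feasible solution $\F'$ on a two-agent coalition, so in any solution where two spokes from the same in-group both receive $o(1)$ utility, that pair is blocking. Core stability therefore forces all but one spoke per in-group to receive $\Omega(1)$, which already gives core welfare $\Omega(n)$ and destroys the gap. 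Beyond this, the heart of the theorem --- the upper bound on the welfare of \emph{every} core-stable solution --- is not actually proved in your sketch; it is deferred to an LP-duality certificate that you acknowledge you have not constructed, and the feasibility check for your welfare-optimal allocation is also waved away as ``direct arithmetic.''

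For contrast, the paper's construction is engineered precisely so that the high-value bilateral option is \emph{not} part of the welfare-optimal solution. It uses a directed cycle $v_1 \to v_2 \to \cdots \to v_n \to v_1$ of unit-weight edges together with a single heavy bidirectional pair $s_{1n} = s_{n1} = M$ (with $f_i(x) = \sqrt{x}$ and $M = n-3$). The welfare optimum routes utility around the long cycle, giving each agent utility about $1$ and total welfare about $n$, while essentially ignoring the heavy edge. Core stability, however, forces $U_1 = U_n \ge \sqrt{M}$ (else the pair $\{1,n\}$ deviates to the heavy bilateral trade), and since agent $1$'s probability mass $\sum_S x_{1S} \le 1$ is then almost entirely consumed by sets containing $n$, agent $2$'s share drops to $O(1/\sqrt{M})$; the balance constraint propagates this starvation around the entire cycle, capping total welfare at $O(\sqrt{n})$. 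The mechanism you need is exactly this crowding-out effect --- a blocking pair whose deviation is \emph{incompatible} with the welfare-optimal flow --- rather than in-group matches that are simultaneously welfare-enhancing and blocking.
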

\begin{proof}
\begin{figure}[h]
    \centering
    \includegraphics{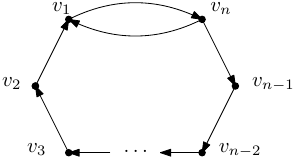}
    \caption{Instance for the proof of \cref{thm:core-neg}}
    \label{fig:enter-label}
\end{figure}
Recall the definition of symmetric weighted utilities from \cref{sec:oracle-concave}. Now consider the instance in \cref{fig:enter-label}. To simplify notation, we denote by $s_{ij}$ the weight along the directed edge $(v_j, v_i)$, the utility function at $v_i$ as $u_i$, etc.  Note that $s_{ij} = 0$ if there is no directed edge $(v_j,v_i)$. We set $s_{1n} = s_{n1} = M$, and  $s_{ij} = 1$ for the remaining directed edges $(v_j, v_i)$, where $M \ge 3$ will be chosen later. We also set $f_i(x) = \sqrt{x}$ for all $i$. Note that 
$$u_1(\{2,n\}) = \sqrt{M+1}; \qquad u_1(\{2\}) = 1; \qquad u_1(\{n\}) = u_n(\{1\}) = \sqrt{M},$$
and further, $h_{12}(\{2,n\}) = \frac{1}{\sqrt{M+1}}$ and
$$h_{12}(\{2\}) = h_{23}(\{3\}) = \cdots = h_{n-1n}(\{n\}) = 1.$$

To lower bound social welfare, consider the solution that sets $x_1(\{2\}) = x_2(\{3\} = \cdots = x_{n-1}(\{n\}) = 1$ and sets $x_n(\{1\}) = \frac{1}{\sqrt{M}}$. This solution has welfare $n$.

We now upper bound the welfare of any core solution. In this solution, denote $p = x_1(\{2\})$, $q = x_1(\{2,n\})$, and $r = x_1(\{n\})$. Denote the utility of agent $i$ in this solution as $U_i$. We have
\begin{equation} 
\label{eq:u2}
U_2 = p \cdot h_{12}(\{2\}) + q \cdot h_{12}(\{2,n\}) = p + q \cdot \frac{1}{\sqrt{M+1}} \le p + \frac{1}{\sqrt{M+1}}.
\end{equation}
Note that $U_1,U_n \le \sqrt{M+1}$. By balance, all of $\{2,3,\ldots, n-1\}$ have the same utility. Therefore, the total utility of the solution is at most
$$\mbox{Total Utility} \le  (n-2) \cdot U_2 + 2 \sqrt{M+1}$$

By balance, we have $U_1 = U_n$. Further, if $U_1 < \sqrt{M}$, then agents $\{1,n\}$ can deviate and obtain utility $\sqrt{M}$ each by setting $x_{1}(\{n\}) = x_{n}(\{1\}) = 1$. Since this is not possible, we have $U_1 \ge \sqrt{M}$. Therefore,
$$ U_1 = p  + q \cdot \sqrt{M+1} + r \cdot \sqrt{M} \ge \sqrt{M}.$$
Since $p+q+r \le 1$, the above implies
$$ p + (1-p) \cdot \sqrt{M+1} \ge \sqrt{M} \qquad \Rightarrow \qquad   p \le \frac{\sqrt{M+1} - \sqrt{M}}{\sqrt{M+1} - 1} \le \frac{1}{\sqrt{M+1}},$$
where the final inequality holds for $M \ge 3$. Plugging this into \cref{eq:u2}, we have $U_2 \le \frac{2}{\sqrt{M+1}}$, so that
$$\mbox{Total Utility} \le (n-2) \cdot \frac{2}{\sqrt{M+1}} + 2 \sqrt{M+1} \le 4 \sqrt{n-2}, $$
where we set $M = n-3$. This shows a gap of $\Omega(\sqrt{n})$ between the welfare of the core and the social optimum.
\end{proof}

Note that the utility function above is submodular and the sharing scheme is cross-monotonic. Therefore the lower bound holds for this case as well; though it is an open question whether it holds for Shapley value sharing specifically.

\paragraph{Bridging the Gap between Core and Welfare.} On the positive side, we can achieve a tradeoff between approximate core stability and approximate social welfare. Towards this end,  we modify \cref{def:core-approx} to the following multiplicative approximation guarantee:  For $\alpha \ge 1$, a feasible solution $\F$ to \dataex is $\alpha$-approximately {\em core stable} if there is no $U \subseteq X$ of users and another solution $\F'$ on the users in $U$ such that for all $i \in U$, $u_i(\F') > u_i(\F)/\alpha$.  

Consider the welfare optimal solution $\F_1$ with social welfare $W^*$ and a core-stable solution $\F_2$. Suppose we take a convex combination of these two solutions where we set $z_{iS} = \beta x_{iS} + (1-\beta) y_{iS}$, where $\{x_{iS}\}$ and $\{y_{iS}\}$ denote the variables in $\F_1$ and $\F_2$ respectively. Clearly, this solution is feasible, and its social welfare is at least $\beta \cdot W^*$. Further, it is easy to check that it is $\frac{1}{1-\beta}$-approximately core stable. This shows the following theorem:

\begin{theorem}
\label{thm:tradeoff}
For any $\beta \in (0,1]$, there is a solution $\F$ to \dataex that is simultaneously a $\frac{1}{\beta}$ approximation to social welfare and $\frac{1}{1-\beta}$-approximately core stable.
\end{theorem}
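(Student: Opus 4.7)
The plan is to follow the convex-combination construction sketched in the paragraph preceding the theorem. Let $\F_1=\{x_{iS}\}$ be a welfare-optimal feasible solution with welfare $W^*$, and let $\F_2=\{y_{iS}\}$ be a core-stable feasible solution whose existence is guaranteed by \cref{thm:core}. I would form a new solution $\F=\{z_{iS}\}$ via the pointwise convex combination $z_{iS}=\beta x_{iS}+(1-\beta) y_{iS}$ for every agent $i$ and every subset $S$, and then verify three things about $\F$: (i) it is feasible, (ii) its welfare is at least $\beta W^*$, and (iii) it is $\tfrac{1}{1-\beta}$-approximately core stable.

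Parts (i) and (ii) reduce to the observation that all relevant quantities are linear in the distribution variables. The probability constraint \cref{eqn:probability-sum-1} and the balance constraint \cref{eqn:balance} are both linear in $\{x_{iS}\}$, so a convex combination of feasible solutions is automatically feasible. The social welfare \cref{eqn:welfare} is likewise a linear functional of $\{x_{iS}\}$, giving $W(\F)=\beta W^*+(1-\beta)W(\F_2)\ge\beta W^*$, which is the claimed $1/\beta$ approximation.

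The core-stability part is the heart of the argument. For each agent $i$, the expected utility $u_i(\F)=\sum_S u_i(S)z_{iS}$ is also linear in the distribution, so it decomposes as $u_i(\F)=\beta u_i(\F_1)+(1-\beta)u_i(\F_2)\ge (1-\beta)u_i(\F_2)$, which I would rearrange to $u_i(\F_2)\le u_i(\F)/(1-\beta)$. Suppose for contradiction that some coalition $U\subseteq X$ and feasible deviation $\F'$ on $U$ blocked $\F$ at level $\alpha=1/(1-\beta)$, so that every $i\in U$ attained $u_i(\F')>u_i(\F)/(1-\beta)$. Chaining with the preceding inequality yields $u_i(\F')>u_i(\F_2)$ for every $i\in U$, contradicting the core stability of $\F_2$. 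Hence no such blocking deviation exists and $\F$ satisfies the claimed stability guarantee.

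The main obstacle here is bookkeeping rather than substance. One has to be careful about the direction of the approximation factor so that the inequality $u_i(\F)\ge(1-\beta)u_i(\F_2)$ combines with the blocking hypothesis in the correct sense; and one has to absorb the small additive slack coming from the fact that \cref{thm:core} only supplies an $\epsilon$-approximate core stable $\F_2$ rather than a strictly core-stable one, which propagates into the final stability guarantee as an additive $O(\epsilon)$ term that vanishes as $\epsilon\to 0$. Beyond that, the entire proof is a straightforward exploitation of the linearity of the feasibility polytope, the expected welfare, and the per-agent expected utility in the distribution $\{x_{iS}\}$.
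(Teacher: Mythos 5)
Your proof is correct and is exactly the argument the paper sketches in the paragraph preceding \cref{thm:tradeoff}: take the pointwise convex combination $z_{iS}=\beta x_{iS}+(1-\beta)y_{iS}$, use linearity of the constraints and of the welfare objective to get feasibility and the $1/\beta$ welfare guarantee, and observe that any coalition in which every member improves beyond $u_i(\F)/(1-\beta)\ge u_i(\F_2)$ would already block the core-stable $\F_2$. You also correctly fill in the details the paper leaves as ``easy to check,'' including the additive slack inherited from the $\epsilon$-approximate core solution supplied by \cref{thm:core}.
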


This shows the following corollary via the {\sc Greedy} rule in \cref{subsec:2stable}:

\begin{corollary}
For any $\beta \in (0,1]$, there is a poly-time computable solution $\F$ to \dataex that is simultaneously a $O\left(\frac{\log n}{\beta}\right)$ approximation to social welfare and $\frac{1}{1-\beta}$-approximately $2$-stable.
\end{corollary}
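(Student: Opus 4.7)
The plan is to mimic the proof of \cref{thm:tradeoff}, but replace each of the two ingredients by its polynomial-time counterpart: use the algorithm of \cref{thm:main} (applied with any fixed $\epsilon$) in place of the exactly welfare-optimal solution, and the {\sc Greedy} matching from \cref{thm:greedy} in place of the exactly core-stable solution. Let $\F_1$ be the solution returned by the multiplicative-weights algorithm for submodular utilities with cross-monotonic sharing, so its welfare is at least $W^*/(c\log n)$ for some constant $c$, where $W^*$ is the welfare-optimum. Let $\F_2$ be the $2$-stable output of {\sc Greedy}. Both are computable in polynomial time and both are feasible, so their convex combination $\F = \beta \F_1 + (1-\beta)\F_2$ (taken variable-wise in the $\{x_{iS}\}$ representation) is feasible as well.

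The welfare bound is immediate from linearity of the objective: $\textnormal{SW}(\F) = \beta \cdot \textnormal{SW}(\F_1) + (1-\beta)\cdot \textnormal{SW}(\F_2) \ge \beta W^*/(c\log n)$, which gives an $O(\log n/\beta)$ approximation. The slightly more delicate point is the approximate $2$-stability of $\F$. Here I would argue as follows: for every agent $k$, utilities are linear in $\{x_{iS}\}$ through \cref{eqn:welfare}, so $u_k(\F) = \beta u_k(\F_1) + (1-\beta) u_k(\F_2) \ge (1-\beta) u_k(\F_2)$. Now suppose some pair $\{i,j\}$ could deviate to a feasible two-agent exchange $\F''$ on $\{i,j\}$ with $u_i(\F''),u_j(\F'')$ strictly larger than $u_i(\F)/(1-\beta)$ and $u_j(\F)/(1-\beta)$ respectively. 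Then $u_i(\F'') > u_i(\F_2)$ and $u_j(\F'') > u_j(\F_2)$, contradicting the $2$-stability of $\F_2$ guaranteed by \cref{thm:greedy}. Hence $\F$ is $\tfrac{1}{1-\beta}$-approximately $2$-stable.

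The main subtlety I expect is the interaction with the $\epsilon$-balance slack in \cref{thm:main}: strictly speaking $\F_1$ only satisfies $\epsilon$-{\sc balance}, so $\F$ inherits a balance violation of $\beta\epsilon$. One can simply pick $\epsilon$ to be an arbitrarily small polynomial in $1/n$ (or absorb it into the statement), keeping running time polynomial; the welfare and stability computations above are unaffected. Alternatively, since $\F_2$ is exactly balanced and contributes weight $(1-\beta)$, one can observe that the violation is multiplied by $\beta$ and declare the result in the same $\epsilon$-balance regime already used throughout \cref{sec:approx}. This completes the sketch.
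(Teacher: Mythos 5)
Your proposal is correct and matches the paper's intended argument exactly: the paper derives this corollary by instantiating the convex-combination construction of \cref{thm:tradeoff} with the $O(\log n)$-approximate solution of \cref{thm:main} in place of the welfare optimum and the {\sc Greedy} $2$-stable matching of \cref{thm:greedy} in place of the core solution, which is precisely what you do (including the correct handling of the inherited $\epsilon$-balance slack). Note that your stability step implicitly reads ``$\frac{1}{1-\beta}$-approximately stable'' as ``no pair can deviate so that each improves by more than a factor $\frac{1}{1-\beta}$,'' i.e.\ $u_i(\F'') > u_i(\F)/(1-\beta)$; this is the reading the paper's own proof of \cref{thm:tradeoff} relies on, even though the displayed definition writes the factor as $u_i(\F)/\alpha$.
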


\newcommand{\A}{\mathcal{A}}
\subsection{Strategyproofness}
\label{subsec:truth}
The discussion so far has ignored strategic considerations on the part of the agents. We now present some negative and positive results for this aspect. 

\paragraph{Model for strategic behavior.} We first present the model for strategic behavior.  We assume that agents can choose to hide tasks or data from the clearinghouse. For agent $i$, let $u_i, h_{ij}$ denote the true utilities and shares, while $\tilde{u}_i, \tilde{h}_{ij}$ denote the reported utilities and shares. If agent $i$ reports fewer tasks, this corresponds to agent $i$ reporting $\tilde{u}_i(S) \le u_i(S)$ for some subsets $S \subseteq X \setminus \{i\}$; correspondingly $\tilde{h}_{ij}(S) \le h_{ij}(S)$. Since there is a centralized entity that does model refinement, if we assume agent $i$ only gets the refined model back and not any data, then the agent does not get utility for the tasks it did not report, so that its perceived utility will be measured using $\tilde{u}_i$.

On the other hand, if agent $i$ hides data, this changes the utility perceived by other agents. For another agent $j$, we must have $\tilde{u}_j(S) \le u_j(S)$ and $\tilde{h}_{ji}(S) \le h_{ji}(S)$ for $S$ such that $i \in S$. Note that it could be that $\tilde{h}_{jk}(S) \ge h_{jk}(S)$ if $i \in S$, but $j,k \neq i$. 

Formalizing this, let $\theta' = \{\tilde{u}, \tilde{h}\}$ be reported utilities and shares and let $\theta = \{u,h\}$ be the true values. We say that a misreport by agent $i$ is {\em feasible} if for the resulting $\theta'$, we have:
\begin{enumerate}
    \item $\tilde{u}_j(S) \le u_j(S)$ for all $j, S$;
    \item $\tilde{h}_{jk}(S) \le h_{jk}(S)$ when either $j = i$ or $k = i$; 
    \item $\tilde{u}_j(S) = u_j(S)$ if $j \neq i$ and $i \notin S$; and 
    \item $\tilde{h}_{jk}(S) = h_{jk}(S)$ if $j,k \neq i$ and $i \notin S$.
\end{enumerate}

Let $\A$ denote an algorithm for which $\tilde{U}_i$ is the utility perceived by $i$ (measured using the utility function $\tilde{u}_i$) when $\A$ is implemented with misreport $\theta'$, and $U_i$ is the corresponding utility (measured using the utility function $u_i$) when $\A$ is run using its true report $\theta_i$.  We say that $\A$ is {\em strategyproof} if for every feasible misreport by $i$, we have $\tilde{U}_i \le U_i$.

\paragraph{Approximate welfare maximizers.} We first show the strategyproofness is incompatible with welfare to any approximation. We will restrict to the class of strategyproof algorithms $\A$ that  are {\em non-wasteful}, meaning that for any instance $\sigma$, if $\vec{x^{\sigma}}$ is the allocation found by $\A$, then for some agent $j$, $\sum_S x^{\sigma}_{jS} = 1$. In other words, the allocation cannot be scaled up while retaining feasibility.

\begin{theorem}
\label{thm:truth}
 For the symmetric weighted setting with proportional sharing,  any  non-wasteful and strateyproof algorithm $\A$ cannot approximate social welfare to better than a factor of $\Omega(\sqrt{n})$.
\end{theorem}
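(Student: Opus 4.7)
The plan is to reuse the hard instance from the proof of \cref{thm:core-neg}: $n$ agents $v_1, \ldots, v_n$ arranged in a chain with $s_{i, i+1} = 1$ and the shortcut $s_{1n} = s_{n1} = M = n-3$, all under the concave utility $f_i(x) = \sqrt{x}$. I will argue that strategyproofness together with non-wastefulness forces $\A$ to give agent $1$ at least $\sqrt{M}$ utility on the truthful report $\sigma$. Once this is established, the welfare upper-bound calculation already carried out inside the proof of \cref{thm:core-neg}---where $U_1 \ge \sqrt{M}$ was shown to imply $p \le 1/\sqrt{M+1}$ and hence $U_2 \le 2/\sqrt{M+1}$---gives $W(\sigma) \le 4\sqrt{n-2}$, while the welfare optimum on $\sigma$ is $n$, exhibiting an $\Omega(\sqrt{n})$ gap.

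The manipulation I would use is the following feasible misreport $\sigma'$ by agent $1$: declare $\tilde{u}_1(\{n\}) = \sqrt{M}$ and $\tilde{u}_1(S) = 0$ for every other $S$ (hiding every task that consumes chain data), and hide every data contribution of agent $1$ to agents other than $n$ by declaring $\tilde{u}_j(S) = u_j(S \setminus \{1\})$ for every $j \ne n$ and every $S$ with $1 \in S$. Each reported value is at most the corresponding truthful value, so $\sigma'$ is a feasible misreport per the model of \cref{subsec:truth}.

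I would then analyze $\A$ on $\sigma'$. The only pairs $(i, S)$ that can transfer utility are $(1, \{n\})$ and $(n, \{1\})$: because $\tilde{u}_1(\{2\}) = 0$, there is no longer any ``buyer'' for agent $2$'s data, so balance at agent $2$ forces $x_2(\{3\}) = 0$, which cascades down the chain to give $x_i(\{i+1\}) = 0$ for every $i \in \{2, \ldots, n-1\}$. Balance at agents $1$ and $n$ now forces $x_1(\{n\}) = x_n(\{1\})$, and non-wastefulness, combined with the fact that all other $x_{jS}$ in $\sigma'$ transfer zero utility, forces this common value to equal $1$. Hence $\tilde{U}_1(\sigma') = \sqrt{M}$, and strategyproofness yields $U_1(\sigma) \ge \sqrt{M}$, completing the argument.

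The main obstacle I expect is making the claim $\tilde{U}_1(\sigma') = \sqrt{M}$ truly rigorous. The literal definition of non-wastefulness only asks that $\sum_S x_{jS} = 1$ for some single agent $j$, which in principle can be met by padding that agent's row with ``phantom'' sets $S$ on which $\tilde{u}_j(S) = 0$, leaving $x_1(\{n\}) < 1$. To close this gap I would either appeal to a mildly strengthened reading of non-wastefulness (e.g., Pareto maximality of the allocation, which disallows phantom padding), or combine non-wastefulness with the hypothesis that $\A$ achieves its approximation factor $\alpha$ on $\sigma'$ too---where the welfare optimum is $2\sqrt{M}$, forcing $x_1(\{n\}) \ge 1/\alpha$---and then resolve the resulting two-sided inequality between the welfare of $\A$ on $\sigma$ and on $\sigma'$ to recover $\alpha = \Omega(\sqrt{n})$.
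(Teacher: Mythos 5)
Your proposal is correct and takes essentially the same approach as the paper: the paper reuses the instance from \cref{thm:core-neg} and the same mechanism (a feasible misreport kills the long cycle so that only the $\{1,n\}$ pair can trade, non-wastefulness then forces that pair to realize utility $\sqrt{M}$, and strategyproofness lifts $U_1 \ge \sqrt{M}$ back to the truthful instance, after which the welfare calculation from \cref{thm:core-neg} finishes the argument). The only difference is cosmetic --- the paper has agent $n$ hide its data contribution $s_{n-1,n}$ rather than agent $1$ hiding its tasks --- and the phantom-padding caveat you correctly raise about the literal definition of non-wastefulness applies equally to the paper's own one-line invocation of it, so your explicit fixes (Pareto-maximality or using the approximation guarantee on the misreported instance) only strengthen the argument.
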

\begin{proof}
We use the same instance as in the proof of \cref{thm:core-neg}. We follow the same proof. Note that for agents $\{1,n\}$, the only non-wasteful solution sets each of their utilities to $\sqrt{M}$. Suppose there were a strategyproof algorithm $\A$ that gives utility $U_1$ to agent $1$ and $n$. If $U_1 < \sqrt{M}$, agent $n$ will report $s_{n-1n} = 0$, thereby killing the long cycle.
But then, since $\A$ is non-wasteful, it will then give utility $\sqrt{M}$ to agent $n$. Therefore, any strategyproof and non-wasteful algorithm $\A$ has $U_1 \ge \sqrt{M}$. Continuing with the proof of \cref{thm:core-neg} shows a gap of $\Omega(\sqrt{n})$ between the welfare of any such algorithm and the optimal social welfare.
\end{proof}

\paragraph{The {\sc Greedy Cycle Canceling} Algorithm.} We now present a truthful algorithm that generalizes greedy matching. Given any directed cycle $\C$ on a subset of agents, let $u_{\C}$ denote the maximum utility that can be derived if agents trade along the cycle. In other words, suppose the cycle has agents $v_1 \rightarrow v_2 \rightarrow \cdots \rightarrow v_k \rightarrow v_1$. Then we compute quantities $\vec{x} = \{x_{12}, x_{23}, \ldots, x_{k1}\} \in [0,1]^k$, where $x_{i i+1} = x_{i+1 \{i\}}$. These quantities satisfy {\sc balance}: 
$$x_{12} \cdot u_{2}(\{1\}) = x_{23} \cdot u_{3}(\{2\}) = x_{34} \cdot u_{4}(\{3\}) = \cdots x_{k1} \cdot u_{1}(\{k\}) = \lambda(\vec{x}).$$
We define $u_{\C} = \max_{\vec{x}} \lambda(\vec{x}) = \min\{ u_{2}(\{1\}), u_{3}(\{2\}), \ldots, u_{1}(\{k\})\}$.

The {\sc greedy cycle canceling} algorithm works as follows: Find the directed cycle $\C_1$ with largest $u_{\C}$. Trade along this cycle and delete these agents. Again find the cycle $\C_2$ with largest $u_{\C}$ among the remaining agents, trade along it, and repeat.

\begin{theorem}
\label{thm:greedy3}
For arbitrary monotone utility functions on the agents, the {\sc Greedy cycle canceling}  algorithm is strategyproof.
\end{theorem}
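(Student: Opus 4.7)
The plan is to prove strategyproofness by induction on the number of agents, after extracting the key structural observation about a feasible misreport by $i$: it may only decrease edge values $u_j(\{a\})$ with $i \in \{j, a\}$, while leaving untouched every edge with $i \notin \{j, a\}$. Consequently, for every cycle $C$ one has $\tilde u_C \le u_C$, with equality whenever $C$ is vertex-disjoint from $i$; and the perceived utility of $i$ in the misreport is exactly $\tilde u_{C^{mis}}$, the reported value of the cycle $C^{mis}$ into which the misreport execution places $i$. Since all edges of $C^{mis}$ have true value at least their reported values, we have $u_{C^{mis}}^{\mathrm{true}} \ge \tilde u_{C^{mis}} = \tilde U_i$.

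Suppose for contradiction that $\tilde U_i > U_i$, writing $v^* = \tilde U_i$ (this subsumes the case when $i$ is unplaced in true greedy via $U_i = 0$). Let $D$ be the first cycle picked by the true-value greedy, with $u_D$ equal to the global maximum of true cycle values. Case A: if $D$ contains $i$, then $i$ is placed immediately with $U_i = u_D \ge u_{C^{mis}}^{\mathrm{true}} \ge v^*$, contradicting $U_i < v^*$. Case B: if $D$ is disjoint from $i$, then $u_D^{\mathrm{rep}} = u_D$, while for every cycle $C$ we have $u_C^{\mathrm{rep}} \le u_C^{\mathrm{true}} \le u_D$; so $D$ also attains the maximum reported value and is picked first by the misreport greedy as well. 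Both executions then face the identical sub-instance on $X \setminus \mathrm{agents}(D)$ with $i$ still present, and the inductive hypothesis on this strictly smaller instance yields $\tilde U_i \le U_i$, again a contradiction. The base case is an instance with no cycles, for which $U_i = \tilde U_i = 0$ trivially.

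The main obstacle is the tie-breaking subtlety in Case B, where several cycles may attain the maximum reported value and the misreport execution could a priori select one different from $D$. I would resolve this by fixing, once and for all, a tie-breaking rule that depends only on agent identities (for example, lexicographic order on the sorted vertex set of each cycle): the set of cycles that tie at the top of the reported ordering always contains the set of cycles that tie at the top of the true ordering restricted to non-$i$ cycles, and any extra tied $i$-cycle in the true ordering only places us into (a variant of) Case A, which already produces the desired contradiction. Equivalently, one can perturb all utilities infinitesimally to remove ties, carry out the proof under genericity, and recover the tied statement by a standard limit argument.
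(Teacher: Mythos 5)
Your proof is correct and takes essentially the same approach as the paper's: both rest on the observation that a feasible misreport by $i$ leaves the value of every cycle avoiding $i$ unchanged and can only decrease the value of cycles through $i$, so the greedy executions coincide until the iteration in which $i$ is placed, at which point $i$'s reported cycle value is dominated by its true-run value. Your inductive packaging and explicit tie-breaking rule merely fill in details the paper's argument leaves implicit.
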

\begin{proof}
Focus on an agent $i$. Consider the setting with true reports, and let agent $i$ be removed in the $k^{th}$ iteration of {\sc greedy}. If agent $i$ misreports, then it does not affect $\tilde{u}_{\ell}(\{\ell'\})$ for agents $\ell, \ell' \neq i$, and cannot increase  $\tilde{u}_{\ell}(\{\ell'\})$ if either $\ell$ or $\ell'$ is equal to $i$. Therefore, $\tilde{u}_{\C}$ for any cycle not containing $i$ remains the same, while that for cycles $\C$ containing $i$ cannot increase.  This implies the first $k-1$ cycles chosen by {\sc greedy} remain the same when agent $i$ misreports. Since agent $i$'s utility will be measured using $\tilde{u}_i$, its utility in any cycle $\C$ that it participates in cannot increase, since $\tilde{u}_{\C}$ did not increase. This means agent $i$'s utility cannot increase by misreporting, showing the algorithm is strategyproof.
\end{proof}

As a corollary, it is immediate that the {\sc Greedy} maximal weight matching algorithm from \cref{subsec:2stable} is strategyproof. \cref{thm:greedy,thm:greedy3} together show that if trades are restricted to be among disjoint pairs of agents, then the greedy maximal weight matching is {\em simultaneously} core-stable, strategy-proof, and a $2$-approximation to optimal welfare (in this case, the maximum weight matching).

\section{Experiments}

We will now empirically compare the performance of our approximation algorithm in \cref{sec:approx} with a no-sharing baseline, and with a pair-wise trade benchmark, showing we outperform both. In our experiments, each agent corresponds to a path in a road network. The delay of each edge in the road network is a random variable and each agent has a set of samples for each edge on its path that it can trade with other agents. The goal of each agent is to trade her samples in order minimize the sample variance in the estimate of the delay on her path.

As motivation, consider trucking or cab companies sharing data to improve each other’s routing and demand forecasting models. Vehicles of these entities traverse different sets of routes and collect data on traffic conditions on road segments that they can share to improve the overall routing of other entities that also use these segments. We note that the experiments are intended to be a proof of concept that for a realistic dataset with sufficient complexity, the method shows improvement over simpler baselines. Nevertheless, our dataset has sufficient nuance, for instance, overlap between participants and correlation structure, that the results should carry over to other datasets with this structure. 


\begin{figure*}[htbp]
    \centering
    \begin{subfigure}[t]{0.25\textwidth}
    \includegraphics[width=\textwidth]{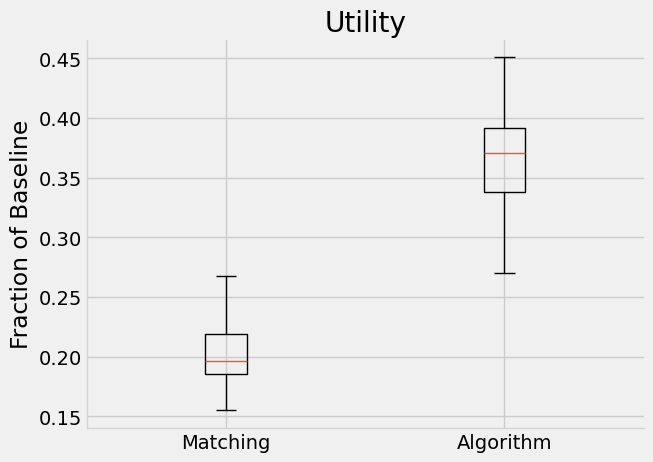}
    \caption{\label{fig:box-plot}}
    \end{subfigure}
         \begin{subfigure}[t]{0.25\textwidth}
         \centering
         \includegraphics[width=\textwidth]{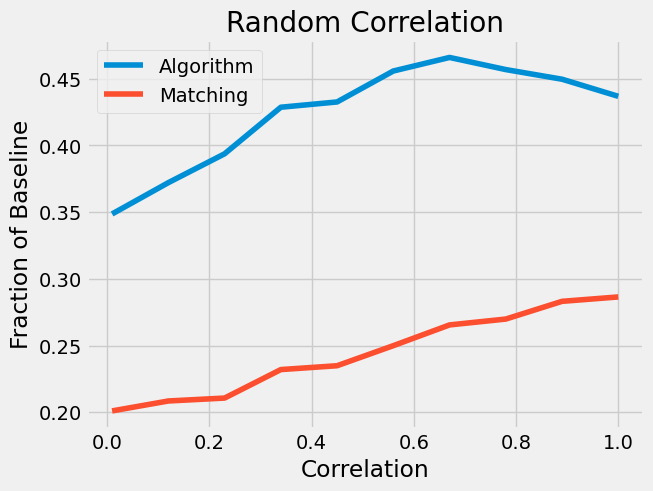}
         \caption{\label{fig:random}}
     \end{subfigure}
     \begin{subfigure}[t]{0.25\textwidth}
         \centering
         \includegraphics[width=\textwidth]{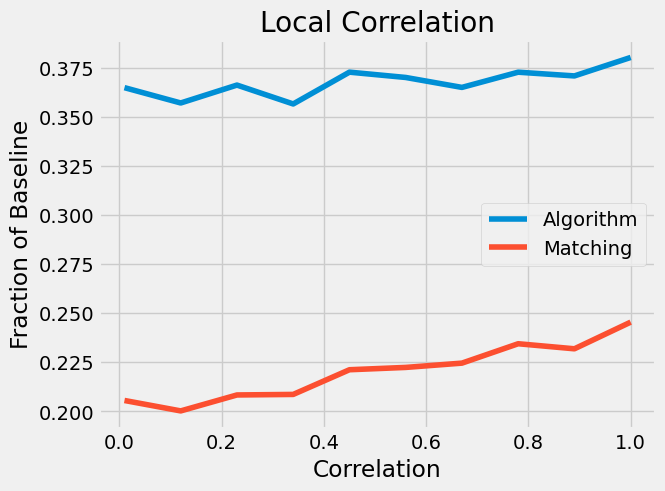}  
         \caption{\label{fig:local}}
     \end{subfigure}
    \caption{   (a) Box plots of the total utility of the algorithm and benchmark (matching) solutions, measured as a fraction of the baseline. (b,c) Total utility of the algorithm and matching benchmark with varying levels of correlation, again measured as a fraction of the baseline. Figure (b) is Random correlation, and (c) is Local correlation.}
\end{figure*}

\paragraph{Setup.} We sample a random neighborhood of radius $8$ from the Manhattan road network in~\cite{roadnetwork-kaggle}. This will serve as the graph of interest for the rest of the experiment. We have $n = 20$ agents. Each agent $i$ is  assigned a path in the graph in the following way: Sample a random node $u$ in the graph. Sample a length $t$  uniformly at random between $5$ and the depth of the BFS tree from $u$. Sample a node $v$ uniformly at random at layer $t$ of the BFS tree. The shortest path from $u$ to $v$ in the graph is the path $P_i$ corresponding to agent $i$, and she is interested in minimizing the variance of the sample mean of the delay of this path.

The delay of each edge $e$ is a random variable whose variance $\sigma^2_e$ is drawn uniformly from $[0,1]$, independently of other edges. Agent $i$ starts with $z^{(i)}$ data points for the delay of her path $P_i$, where $z^{(i)}$ is chosen uniformly at random between $2$ and $9$. Therefore, she starts with $z_{e}^{(i)} = z^{(i)}$ data points for each edge $e$ in her path. 

The agent's objective is to minimize the sum of the sample variances of the delays of the edges in her path $P_i$. Her initial sample variance is $\frac{\sigma_e^2}{z_e^{(i)}}$ and therefore, her initial total sample variance is 
\[
\mbox{Baseline for } i = v_0(i) := \sum_{e\in P_i} \frac{\sigma_e^2}{z_e^{(i)}}
\]
Suppose she receives data from a set of other agents $S$, who collectively give her $z_e^{(S)}$ additional samples for edge $e$. Then, her utility is defined as
the reduction in total sample variance. That is,
\[
\mbox{Utility of } i = u_i(S)=v_0(i)-\sum_{e\in P_i} \frac{\sigma_e^2}{z_e^{(i)}+z_e^{(S)}}.
\]
This is a monotonically  increasing submodular function. We perform the cost-sharing via the Shapley value. We simulate
the Shapley value by taking $m=10$ random permutations, and use
use $\epsilon=0.01$ as the violation allowed in the {\sc balance} constraints.

\paragraph{Results.}
Since the optimal solution to \dataex{} is NP-Hard, we compare the total utility of our approximation algorithm (\cref{sec:oracle-general-case}) to the baseline sample variance $\sum_i u_0(i)$, where no agents in the solution share their data. As a benchmark, we also find the  best solution with trades only between pairs of agents, much like algorithms for kidney exchange. For this, we construct a graph on the agents where the weight for pair $(i,j)$ is the maximum utility of \dataex with $\epsilon$-{\sc Balance} on just these two agents. We then find a maximum weight matching on this weighted graph. (See \cref{subsec:2stable} for more details.)

In \cref{fig:box-plot}, we present the total utility of our algorithm and the matching benchmark, measured as a fraction of the baseline sample variance, across several random samples of the road network.  Our algorithm outperforms the benchmark, by a factor of $1.8$ on average. Note that we can easily construct instances with a single long path with $m$ edges and many paths sharing one edge with this path, where our algorithm outperforms matching by a factor of $\Omega(m)$. The goal of our experiment is to show that our algorithm has a significant advantage even in more realistic settings.

We now introduce \textit{correlation} between the random variables of the edges. In this setting,
we assume that correlated edges have
their delays sampled from the 
same distribution.
We introduce this correlation in two ways. In {\em random} correlation (\cref{fig:random}), we sample pairs of edges uniformly at random and correlate the pair. We measure the correlation ($x$-axis) as a ratio of the number of pairs sampled to the total number of edges in the graph. In {\em local} correlation (\cref{fig:local}), we sample vertices uniformly at random, and correlate all the edges incident to this edge. We measure the correlation ($x$-axis) as a ratio of the number of vertices sampled to the total number of vertices in the graph.

We measure how the total utility of our algorithm and the matching benchmark changes as a function of the correlation in \cref{fig:local,fig:random}, again measured as a fraction of the baseline sample variance. Our algorithm outperforms the benchmark in both modes of correlation, and at both high  and low levels of correlation.
\section{Conclusion}

There are several open questions that arise from our work. First, the approximation ratio for Shapley value sharing is $O(\log n)$ and we have not ruled out the existence of a constant approximation.  Secondly, our algorithmic results require utilities to be submodular. Though this is a natural restriction, there are cases where it does not hold. For instance, if each dataset is a collection of features, the effect of combining features could be super-additive~\cite{fryer2021shapley}. Devising efficient algorithms for special types of non-submodular functions that arise in learning is an interesting open question.

Next, for Shapley value sharing (as opposed to proportional sharing), our negative result for core-stability (\cref{thm:core-neg}) only shows the absence of a $(2 - \epsilon)$-approximation to welfare. Either strengthening this impossibility result or showing a constant approximation that lies in the exact core would be an interesting question.   Further,  
it would be interesting to study strategyproofness for thick or random markets, analogous to results for stable matchings~\cite{ImmorlicaM,Kanoria}.

Finally, our model can be viewed as budget balance with a single global price per unit utility transferred. Though there are hurdles to defining an Arrow-Debreau type market with endogenous prices for each data type, it would be interesting to define a richer and tractable class of markets along this direction.

\paragraph{Acknowledgment.} We thank Jian Pei for several helpful suggestions.

\bibliographystyle{plain}
\bibliography{references}
\end{document}